\newtheorem{definition}{Definition}
\newtheorem{proposition}{Proposition}
\newtheorem{theorem}{Theorem}
\newtheorem{lemma}{Lemma}
\begin{document}

\begin{frontmatter}
	\title{Stochastic Optimal Control of Iron Condor Portfolios for Profitability and Risk Management}
	\author[label2]{Hanyue Huang \corref{cor1}}
 	\author[label1]{Qiguo Sun \corref{cor1}}
        \author[label1]{Xibei Yang}

	\affiliation[label1]{organization={School of Computer, Jiangsu University of Science and Technology},
		city={Zhenjiang},
		postcode={212003},
		state={Jiangsu Province},
		country={China}}
	\affiliation[label2]{organization={Technical University of Munich},
		city={Munich},
		postcode={80333},
		state={Bavaria},
		country={Germany}}
	\cortext[cor1]{indicates equal contribution}

	\begin{abstract}
Previous research on option strategies has primarily focused on their behavior near expiration, with limited attention to the transient value process of the portfolio. In this paper, we formulate Iron Condor portfolio optimization as a stochastic optimal control problem, examining the impact of the control process \( u(k_i, \tau) \) on the portfolio's potential profitability and risk. By assuming the underlying price process as a bounded martingale within $[K_1, K_2]$, we prove that the portfolio with a strike structure of $k_1 < k_2 = K_2 < S_t < k_3 = K_3 < k_4$ has a submartingale value process, which results in the optimal stopping time aligning with the expiration date $\tau = T$.
Moreover, we construct a data generator based on the Rough Heston model to investigate general scenarios through simulation. The results show that asymmetric, left-biased Iron Condor portfolios with $\tau = T$ are optimal in SPX markets, balancing profitability and risk management. Deep out-of-the-money strategies improve profitability and success rates at the cost of introducing extreme losses, which can be alleviated by using an optimal stopping strategy. Except for the left-biased portfolios $\tau$ generally falls within the range of [50\%,75\%] of total duration. In addition, we validate these findings through case studies on the actual SPX market, covering bullish, sideways, and bearish market conditions.
	\end{abstract}
	\begin{keyword}

		Iron Condor \sep Fractional Brownian Motion  \sep Rough Heston \sep Optimal Stopping Time

	\end{keyword}

\end{frontmatter}


\section{Introduction}

Option portfolio optimization can be formulated as a stochastic optimal control problem, where the portfolio value process indicated by $X_t$ is a state process, which can be partly controlled by some control process $u$.
For a given initial point $x_0 \in \mathbb{R}^n$, we consider $X_t$ as a controlled stochastic differential equations:

\begin{equation}
        \begin{cases}
	dX_t = \mu (t,X_t,u_t)dt + \sigma(t,X_t,u_t)dW_t, \\
        X_0 = x_0
        \end{cases}
	\label{eq:value process}
\end{equation}
where $\mu$ and $\sigma$ are some given functions with
\begin{equation}
	\begin{cases}
		\mu: \mathbb{R}^+ \times  \mathbb{R}^n \times  \mathbb{R}^k \rightarrow  \mathbb{R}^n, \\
		\sigma: \mathbb{R}^+ \times  \mathbb{R}^n \times  \mathbb{R}^k \rightarrow  \mathbb{R}^{n\times d}
	\end{cases}		
	\label{eq:mu and sigma}
\end{equation}
where $W$ is a d-dimensional Wiener process. The control process $u$ is defined to be adapted to the $X_t$, denoted by $\mathbf{u}(t,X_t)$.
Moreover, it is also an admissible function in the admissible control law classes $U$ for all $t \in \mathbb{R}^+ $ and  $x \in \mathbb{R}^n $.

In real applications, the state process $X_t$ must reside within a fixed domain. Consider a general situation where all options in the portfolio share the same expiration date, $T$. Moreover, let us assume a group of option strategies where the maximum potential loss and maximum potential gain are predetermined at the initial state $t=0$. This setup naturally forms a domain for such a class of option portfolios, denoted as $D \subseteq [0,T] \times \mathbb{R}^n$. 
When the state process reaches the boundary $\partial D$, the investment terminates. Thus, we define the stopping time as follows:
\begin{equation}
    \tau = \inf \{ t \geq 0 \mid (t, X_t) \in \partial D \} \wedge T,
\label{eq:stopping_time}
\end{equation}
where $x \wedge y = \min(x, y)$.

Furthermore, we define the instantaneous utility function as $F(t, x, u)$ and the terminal utility (or bequest) as $\Phi(\tau, X_\tau)$, where $\Phi: \partial D \to \mathbb{R}$. The optimization problem is then to maximize the following expected value:
\begin{equation}
    \mathbb{E} \left[ \int_0^{\tau} F(s, X_s, u_s) \, ds + \Phi(\tau, X_\tau) \right].
\label{eq:optimization_problem}
\end{equation}

A well-known option strategy belonging to the above class is the Iron Condor strategy \citet{cohen2005bible,woodard2011iron}. Without loss of generality, $X$ only contains a underlying process $S_t$ and a bank account $B_t$ process as follows:
\begin{equation}
	\begin{cases}
		dS_t = \mu S_t dt + \sigma S_t dW_t, \\ 
            dB_t = rB_t dt
	\end{cases}		
	\label{eq:simplified process}
\end{equation}
where $r$ is the short rate. From general portfolio theory $X$ process can be expressed as,
\begin{equation}
    dX_t =  X_t((1-\omega)\frac{dB_t}{B_t}+ \omega \frac{dS_t}{S_t})
\label{eq:portfolio theory}
\end{equation}
where $\omega$ is the weight of the underlying asset.

If we do not consider optimizing consumption and set $F=0$,  the utility function depends solely on the terminal wealth, which is the usual case for Iron Condor portfolio, then the control process $u \in U$, where $U$ is the family of Iron Condor portfolios corresponding to some underlying, is determined by the structure of the four options and time t. Specifically, Iron Condor combines a bullish put spread and a bearish call spread with option strikes satisfying $k_1 < k_2 \leq k_3 < k_4$. A concrete example of the terminal Profit and Loss (P\&L) profile is illustrated in Figure \ref{Fig:Iron condor Expiration}. 

Most previous research on the Iron Condor focused on portfolio behavior at or near the expiration date \citet{de2023simple, dziawgo2020iron}. For instance, \citet{de2023simple} conducted a comparative study on the success rates of put spreads, call spreads, and Iron Condor portfolios near expiration, using the SPX dataset. Their findings indicated that put and call spreads generally exhibit higher success rates compared to Iron Condor portfolios. In addition, the success rates of all strategies decrease as the time to expiration increases.
In another study, \citet{dziawgo2020iron} analyzed the risk measures of Iron Condor portfolios through the lens of option Greeks. Their results revealed that all risk metrics fluctuate significantly over time.

In another study, \citet{dziawgo2020iron} analyzed the risk measures of Iron Condor portfolios based on option Greeks. Their results revealed that all risk metrics fluctuate significantly with the increase of time horizon. Despite these contributions, previous research did not dive into the transient behavior of the Iron Condor portfolio. A comprehensive investigation of the potential profits and risks in this aspect, either through theoretical proofs or simulation-based analyses, is required.

\begin{figure}[H]
	\centering
	\includegraphics[trim=0cm 0cm 0cm 0cm,clip=true,width=12cm]{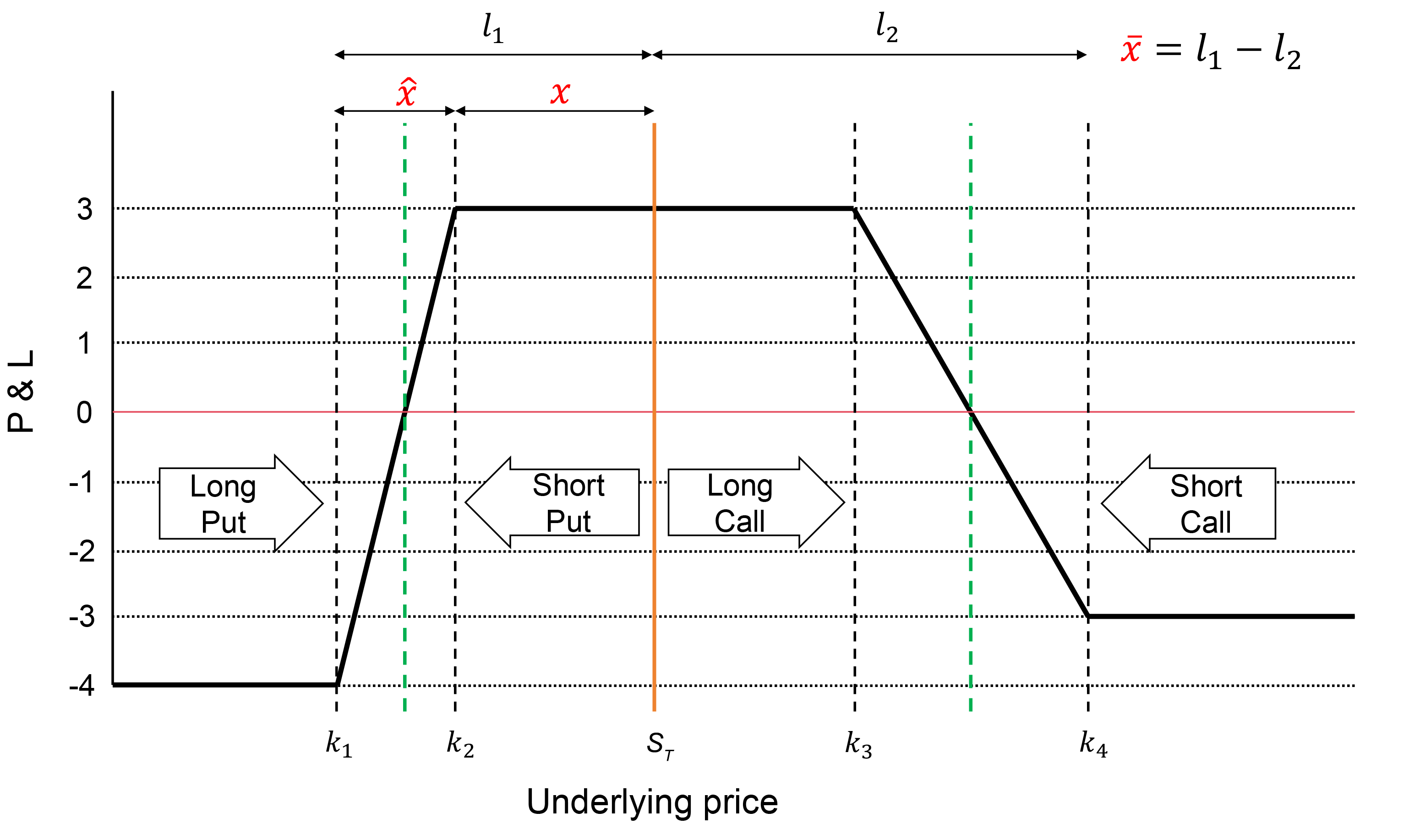}
	\vspace*{-3mm}
	\caption{
		Schematic diagram of $P\&L$ of an Iron Condor portfolio at expiration $T$, where $S_T$ represents the current underlying price, and $k_1$, $k_2$, $k_3$, and $k_4$ correspond to the strike prices of the long put, short put, long call, and short call options, respectively. The two green dashed lines indicate the breakeven prices of the put and call spreads. The control process $u$ is paramtrized by $x$ (moneyness), $\hat{x}$ (span) and $\bar{x}$ (asymmetry degree). }		
	\label{Fig:Iron condor Expiration}
\end{figure}

Financial stochastic process simulation models have undergone extensive development, particularly in addressing volatility-related challenges.
The classical Black-Scholes model assumes a constant volatility surface, which is an unrealistic simplification. To address this limitation, \citet{dupire1994pricing} developed local volatility models that simulate volatility as a deterministic function of the underlying price and time, effectively capturing time-inhomogeneity. Meanwhile, \citet{heston1993closed} proposed the renowned Heston model, which describes volatility using stochastic differential equations driven by Brownian motion, incorporating additional dynamics such as mean reversion. This results in a semi-martingale process. However, these classical models often fail to capture the full complexity of observed option prices.

\citet{gatheral2014arbitrage} identified that realized log-volatility behaves like a fractional Brownian motion (fBm) with a Hurst exponent $H$ around 0.1. This insight spurred the development of rough volatility models. The Rough Fractional Stochastic Volatility (RFSV) model, introduced by \citet{bayer2016pricing}, demonstrated remarkable consistency with observed SPX volatility surfaces. To integrate the rough volatility dynamics with the analytical tractability of the classical Heston model, the Rough Heston model was developed. This model replaces the standard Brownian motion with fBm characterized by $H < 0.5$.

Despite their successes, rough stochastic models are computationally intensive. \citet{bennedsen2017hybrid} developed a hybrid scheme to enhance simulation efficiency, applying it to Monte Carlo pricing in the rough Bergomi model,
 achieving faster and more accurate fits for implied volatility smiles. Markovian approximation techniques, such as those in \citet{abi2019lifting}, 
 further speed up rough volatility models by approximating the power-law kernel using a system of exponential kernels.
\citet{fadugba2020homotopy} use homotopy analysis method to price European call option with time-fractional BS equation. 
\citet{wang2022practical} employ finite difference method to study the multi-dimensional fractional
Balck-Scholes model under three underlying assets.
Recently, \citet{wong2024simulation} introduced a fast algorithm for simulating fBm-driven processes, achieving a tenfold speed increase compared to traditional rough Heston simulations.

In this work, we first provide a theoretical proof of the optimal stopping time for an Iron Condor portfolio with specific structures where the underlying prices follow a bounded martingale. Next, we utilize the Rough Heston model and its fast simulation algorithms to design a data generator to investigate the potential profits and risks associated with Iron Condor portfolios under more general conditions.

Specifically, let
$(\omega,\mathcal{F},P,\mathbb{F})$ denote a filtered probability space, where the filtration $\mathbb{F}$ satisfies the usual conditions. 
Our objective is to determine the optimal control process  $u(k_i,\tau) \in \mathbb{F}_t,i\in \{1,2,3,4\}$  (or its parameterized form $u(x,\hat{x},\bar{x},\tau)) $ for all $\tau < t$.  
The parameterized study is based on simulation methods, where the three key portfolio structure parameters used in the simulations are moneyness ($x$), strike span ($\hat{x}$), and asymmetry degree ($\bar{x}$).

The remainder of this paper is organized as follows: Section 2 details the methodology and metrics used in this study. Section 3 presents the theoretical proof of the optimal control strategy for a bounded martingale process. Section 4 focuses on the symmetric Iron Condor portfolio simulation and analysis. Section 5 investigates the asymmetric Iron Condor portfolio. Section 6 validates the findings from the simulation on actual SPX datasets across bullish, sideways, and bearish markets. Finally, Section 7 concludes the paper and suggests potential directions for future research.

\section{Methodology}
\subsection{Data generator implementation}

The data generator is utilized for simulation purposes, generating underlying asset paths and performing Monte Carlo-based option pricing for the simulation of Iron Condor portfolios.
To design the data generator, we must add more structural assumptions for $S_t$, so the univariate Rough Heston framework \citet{el2019roughening} is employed.  
The volatility term in Rough Heston evolves as a correlated fractional Brownian motion (fBm) with Hurst parameter $H < 0.5$, reflecting its self-similar Gaussian process nature. The $S_t$ process is defined as follows:

\begin{definition}
The underlying asset prices have P-dynamics defined as
\begin{equation}
    dS_t = \mu S_t \, dt + \sqrt{V_t} S_t \, dW_t^1,
\end{equation}
with 
\begin{equation}
    V_u = V_t + \frac{\kappa}{\Gamma\left(H + \frac{1}{2}\right)} \int_t^u \frac{\theta_t(s) - V_s}{(u - s)^{\frac{1}{2} - H}} \, ds 
    + \frac{\nu}{\Gamma\left(H + \frac{1}{2}\right)} \int_t^u \frac{\sqrt{V_s}}{(u - s)^{\frac{1}{2} - H}} \, dW_s^2,
\end{equation}
where:
$\mu$ is he drift term of the asset price, representing the expected return rate,
$V_t$ is the stochastic volatility of the underlying asset at time $t$,
$W_t^1$ is a standard Brownian motion driving the asset price process,
$W_s^2$ is a standard Brownian motion independent of $W_t^1$ that driving the volatility process,
$\kappa$ is the mean reversion rate, $\theta_t(s)$ is the long-term, $F_t$ measurable function indicating the mean level of the variance process, 
$\Gamma(H + \frac{1}{2})$ is the Gamma function evaluated at $H + \frac{1}{2}$, scaling the rough volatility dynamics,
$\nu$ is the volatility of volatility parameter, influencing the intensity of the volatility process.
The stationary increments of fractional Gaussian noise (fGn) has an autocovariance function defined as,
\begin{equation}
    \rho_H(k) = \frac{1}{2}(|k+1|^{2H} + |k-1|^{2H} - 2|k|^{2H}), \quad k \in \mathbb{R}^+. 
\end{equation}
\end{definition}

Since the subsequent experiments are conducted on the SPX option chain, the params of the Rough Heston model adopts $r = 0$, $V_0 = 0.0392$, $\kappa = 0.1$, $\theta = 0.3156$, $\nu = 0.0331$, and $\rho = -0.681$, which is calibrated by \citet{ma2022fast}.
Moreover, the fast algorithm proposed by \citet{wong2024simulation} is employed to perform Monte Carlo-based option pricing at each time step prior to maturity. The risk-neutral measure for the fractional Brownian motion (fBm) process is adopted using an fBm-specific version of Girsanov's theorem \citet{hu2003fractional}.

We consider relative strikes for call and put options within the range $k \in [0.8, 1.2]$ with increments of $0.2$, over the time horizon $t \in [0, T]$. The Monte Carlo simulation uses 10,000 trajectories, and each procedure is repeated 30 times to ensure robust results.

\subsection{Iron Condor Portfolio}

An Iron Condor strategy is a combination of a bullish put spread and a bearish call spread, the investors achieve maximum profit if $S_T$ remains between $k_2$ and $k_3$. An Iron Condor portfolio adopt an adapt control process $u(k_i,\tau)$ or the the parametrized form $u(x,\hat{x},\bar(x),\tau)$, which is  defined as follows (see also Figure 1 but use $S_0$ rather than $S_T$):

$x$ measures the relative strike position to the current underlying price, defined as:
\begin{equation}
    x = \frac{|k_2 - S_0|}{S_0},
\end{equation}
where $S_0$ represents the underlying price, and $k_2$ is the strike of the short put.

$\hat{x}$ measures the distance between $k_1$ and $k_2$ (or $k_4$ and $k_3$ due to symmetry), defined as:
\begin{equation}
    \hat{x} = (k_2 - k_1).
\end{equation}

The asymmetry, $\bar{x}$, measures the imbalance of moneyness between the bullish put spread and the bearish call spread, defined as:
\begin{equation}
    \bar{x} = (S_0 - k_1) - (k_4 - S_0),
\end{equation}
where $k_1$ and $k_4$ are the strike prices of the first and fourth options, respectively

\begin{definition}
    An \textbf{Iron Condor} portfolio \( P_t(u(k_i,\tau)) \) is a function of the control process $u$ and stopping time $\tau$, such that 
     \( P_t(u(k_i,\tau)) \) has the following dynamics: 
\begin{equation}
	\begin{cases}
		P_{u,\tau} = (V_P(k_2, S_t,\sigma,t) - V_P(k_1,S_t,\sigma, t)) + (V_C(k_3,S_t,\sigma, t) - V_C(k_4,S_t,\sigma, t)),  \\
		P_{0,k} = 0
	\end{cases}		
	\label{eq:normalized portfolio}
\end{equation}
subject to the constraints
\begin{equation}
	\begin{cases}
		k_1 < k_2 \leq k_3 < k_4; \\
		\tau \leq t 
	\end{cases}		
	\label{eq:constrain}
\end{equation}
Where $V_p$ and $V_c$ are Monte Carlo option pricing functions for put and call options.
\end{definition}

\subsection{Datasets Partition}

Due to the complex nonlinear structure of the payoff of Iron Condor portfolios, deriving the optimal stopping strategy under general conditions is challenging, so we conduct simulations using the data generator. 

The overall generated dataset is denoted by \( D_{n,t,f} \in \mathbb{R}^{N \times T \times F} \), where:

\begin{itemize}
    \item \( N \) is the number of underlying prices trajectories.
    \item \( T \) is the total time steps to maturity.
    \item \( F \) is the number of portfolios.
\end{itemize}

We note that $D_{n,t,0}$ are all the underlying price processes, and $D_{n,t,i}, i \in [1,F]$ are all the normalized value processes of Iron Condor portfolios under different control $u(k_i,T)$ for the n-th underlying price process. 

Therefore, dataset partition is based on the 0-th dimension of F for N underlying price process, defined as follows:

\begin{definition}
The datasets of bullish market $D_r$, sideway market $D_M$, and bearish market $D_l$ are the partition of $D$ with the following rule: 
\begin{equation}
    \begin{cases} 
        D_r := D_{n,t,f}|\frac{D_{n,T,0}}{D_{n,0,0}} \in [1.1, +\infty],\\
        D_M := D_{n,t,f}|\frac{D_{n,T,0}}{D_{n,0,0}} \in [0.9, 1.1], \\
        D_l := D_{n,t,f}|\frac{D_{n,T,0}}{D_{n,0,0}} \in [-\infty, 0.9]
    \end{cases}
    \label{eq:dataset partition}
\end{equation} 
\end{definition}

\subsection{Variable and indicator design}

Moreover, To simplify subsequent analysis, we only consider the options portfolio payoff process, and set weight of $B_t$ process in Equation \ref{eq:portfolio theory} to 0.
Furthermore, we normalize \( P_t(u(k_i,\tau)) \) to get the potential profit under control $u$ denoted by $\phi_{t,u}$, defined as:
\begin{equation}
        \phi_{t,u} = \mathbb{E}_{\omega} [ \frac{P_0(\omega,t,u)-P_t(\omega,t,u)}{P_0(\omega,t,u)} |D ], \quad  t \in [0,T], \quad u \in U
	\label{eq:norm p}
\end{equation}

Consequently, the potential profit under control $u$  at the expiration date is denoted by $\phi_{T,u}$, and the potential profit at optimal stopping time is denoted by $\phi_{\tau,u}$. We note that the optimal stopping time is determined on dataset $D$ and then used as a constant in calculating other metrics.  
Formally, 
\begin{equation}
	\begin{cases}
	\phi_{\tau,u}=  \mathbf{max} \quad \mathbb{E}[\phi_{t,u} |D], \quad  t \in [0,T], \quad u \in U \\
	\tau(u) = \mathbf{argmax} \quad \mathbb{E} [\phi_{t,u} |D], \quad  t \in [0,T], \quad u \in U \\
	\end{cases}
	\label{eq:phi_tau}
\end{equation}

Next, we define the success rates at time $T$ and optimal stopping time $\tau$ as:

\begin{equation}
	\begin{cases}
	\theta_{T,u}=  \mathbb{E}_{\omega} [\mathbb{I}_{ \phi_{T,u}>0})|D], \quad  t =T, \quad u \in U \\
	\theta_{\tau,u}=  \mathbb{E}_{\omega} [\mathbb{I}_{ \phi_{\tau,u}>0})|D], \quad  t = \tau, \quad u \in U \\
	\end{cases}
	\label{eq:theta}
\end{equation}

The two potential profit metrics for sideways market dataset $D_M$ are denoted by $\phi_{T,M}$ and $\phi_{\tau,M}$, in which we use the same $\tau$ values in $\phi_{\tau}$ but condition on dataset $D_M$.

Finally, the risk $\eta_{u}$ results from significant bullish or bearish markets are measured based on $D_r$ and $D_l$ data, which is defined as:

\begin{equation}
	\begin{cases}
	\eta_{T,u}|D_r = \mathbb{E}_{\omega} [ \phi_{T,u} | D_r] , \quad  t =T, \quad u \in U \\
	\eta_{T,u}|D_l = \mathbb{E}_{\omega} [ \phi_{T,u} | D_l] , \quad  t =T, \quad u \in U  \\
	\end{cases}
	\label{eq:eta}
\end{equation}

\section{Optimal Control for a Bounded Martingale Process}

This section analyzes a simplified case where \( S_t \) is a bounded martingale under the risk-neutral measure \( \mathbb{Q} \), and \( K_2 \leq S_t \leq K_3 \) for all \( t \in [0, T] \).

We begin with the following lemma:

\begin{lemma}
\label{lem:low_volatility}
Let \( S_t \) be a martingale, and assume that \( S_t \) remains within the profitable region \([K_2, K_3]\) for all \( t \in [0, T] \). Then the value process of Iron Condor portfolio \( P_{u, t} \) is a submartingale. 
\end{lemma}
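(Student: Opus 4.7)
My plan is to establish the submartingale property of $P_{u,t}$ by exhibiting it as a risk-neutral martingale, exploiting the fact that the Iron Condor's terminal payoff collapses to zero under the stated hypothesis, and then observing that every martingale is trivially a submartingale.

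The argument would proceed in three short steps. First, I would check that the terminal payoff vanishes almost surely: under the configuration $k_1 < k_2 = K_2 \leq S_T \leq K_3 = k_3 < k_4$ each of the four intrinsic values $(k_1 - S_T)^+$, $(k_2 - S_T)^+$, $(S_T - k_3)^+$, and $(S_T - k_4)^+$ equals zero, so $P_{u,T} = 0$ $\mathbb{Q}$-almost surely. Second, because $S_t$ is a bounded $\mathbb{Q}$-martingale with zero short rate (matching the paper's calibration), each of $V_P$ and $V_C$ is a $\mathbb{Q}$-conditional expectation of its intrinsic terminal payoff, so linearity yields $P_{u,t} = \mathbb{E}^{\mathbb{Q}}[P_{u,T}\mid\mathcal{F}_t]$. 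Third, the tower property gives $\mathbb{E}^{\mathbb{Q}}[P_{u,s}\mid\mathcal{F}_t] = P_{u,t}$ for any $s \in [t, T]$, which is in particular the submartingale inequality $\mathbb{E}^{\mathbb{Q}}[P_{u,s}\mid\mathcal{F}_t] \geq P_{u,t}$. Together with $P_{u,T} = 0$, this last observation also underpins the optimal-stopping corollary $\tau = T$ announced in the abstract, since zero is the minimum closing cost of the spread and thus maximizes the seller's realized profit $P_{u,0} - P_{u,\tau}$.

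The step I expect to require the most care is making the bounded-martingale hypothesis compatible with the option-pricing representations of $V_P$ and $V_C$. The paper defines these as Monte Carlo pricing functions under Rough Heston, whereas the lemma's hypothesis replaces the dynamics of $S_t$ by a bounded martingale; for the clean tower-property argument above I would need $V_P$ and $V_C$ to be computed under the same law as the assumed $S_t$. If instead a different pricing model is retained, then It\^o's formula produces a drift for $P_{u,t}$ of the form $\frac{1}{2}(\sigma_{\mathrm{true}}^2 - \sigma^2) S_t^2 \partial_{SS} P$, whose sign depends on the net gamma of the Iron Condor and the gap between implied and realized volatility, and I would then argue by gamma bookkeeping in the profitable region $(K_2, K_3)$ that this drift has the non-negative sign required by the submartingale direction. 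I would open the proof by stating the pricing convention explicitly, so that this ambiguity does not undercut the direction of the resulting inequality.
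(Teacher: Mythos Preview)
Your route differs from the paper's. The paper argues by a theta comparison: since the inner strikes $K_2,K_3$ sit closer to the money than the outer strikes $K_1,K_4$, it asserts $|\Theta(K_2,t)|\geq|\Theta(K_1,t)|$ and $|\Theta(K_3,t)|\geq|\Theta(K_4,t)|$, so the net portfolio theta is non-negative; because all four intrinsic values vanish on $[K_2,K_3]$, this time decay is treated as the sole contribution to the drift, and the submartingale inequality is read off directly. Your tower-property argument is cleaner, but, as you already flag, it collapses the content: if $V_P$ and $V_C$ are priced under the same bounded-martingale law as $S_t$, every leg is worth zero at every date, so $P_{u,t}\equiv 0$ and the inequality is an equality. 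Note that this also makes your closing remark on the seller's profit vacuous, since then $P_{u,0}=0$ as well and every stopping time is equally optimal rather than uniquely $\tau=T$.

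The paper is implicitly in your Case~2---options marked under an unconstrained pricing model while the realized path happens to stay in $[K_2,K_3]$---and its theta-only step is precisely the heuristic that drops the gamma term you isolate. Your drift $\tfrac{1}{2}(\sigma_{\mathrm{true}}^2-\sigma^2)S_t^2\,\partial_{SS}P$ is the correct object there, but the bounded-martingale hypothesis does not pin down either the realized quadratic variation or its relation to the pricing volatility, so the sign cannot be settled by ``gamma bookkeeping in the profitable region'' alone; the net gamma $\partial_{SS}P$ is positive on $(K_2,K_3)$, but the factor $\sigma_{\mathrm{true}}^2-\sigma^2$ is unconstrained. An extra assumption (e.g.\ realized volatility bounded by implied) is needed and is supplied neither by you nor by the paper. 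In short, your Case~1 proof is correct but trivializes the lemma, and your Case~2 diagnosis is sharper than the paper's own treatment though still incomplete.
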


\begin{proof}

The rate of time decay (\( \Theta \)) for short and long options is given by:
\begin{equation}
\begin{cases}
\Theta_{\text{long}} = - \frac{\partial V_C}{\partial t} \quad \text{or} \quad - \frac{\partial V_P}{\partial t},\\
\Theta_{\text{short}} =  \frac{\partial V_C}{\partial t} \quad \text{or} \quad \frac{\partial V_P}{\partial t},
\end{cases}
\label{eq:cal_theta}
\end{equation}
where \( V_C \) and \( V_P \) denote the call and put option prices, respectively.

According to the no-arbitrage theory, given the relation $K_1 < K_2 \leq S_t \leq K_3 < 4$, we have 
\begin{equation}
\begin{cases}
V_{C,t}(K_1) \leq V_{C,t}(K_2),\\
V_{P,t}(K_4) \leq V_{P,t}(K_3)
\end{cases}
\label{eq:ineq_price}
\end{equation}

Applying dynamic programming lead to the following relations,

\begin{equation}
\begin{cases}
|\Theta(K_1,t)| \leq |\Theta(K_2,t)|, \\
|\Theta(K_4,t)| \leq  |\Theta(K_3,t)| 
\end{cases}
\label{eq:ineq_theta}
\end{equation}

We have thus obtained the following inequality:
\begin{equation}
    \Theta(K_1,t) + \Theta(K_2,t) + \Theta(K_3,t) + \Theta(K_4,t) \geq 0, \forall t \in [0,T]
\end{equation}

Since \( S_t \) is within the profitable region \([K_2, K_3]\), the intrinsic values of all options remain \( 0 \), and changes in portfolio value are driven by extrinsic time decay. So we obtain the inequality

\[
\mathbb{E}[P_{u, t+1} \mid \mathcal{F}_t] \geq P_{u,t}, \quad \forall t \in [0, T].
\]
Thus, \( P_{u,t} \) is a submartingale. This concludes the proof of Lemma~\ref{lem:low_volatility}.
\end{proof}

We now recall a foundational proposition in the theory of optimal stopping, which serves as a basis for analyzing American-style options.

\begin{proposition}
\label{prop:optimal_stopping}
\begin{enumerate}
    \item If the discounted payoff process is a submartingale, then late stopping is optimal, i.e., \( \hat{\tau} = T \).
    \item If the discounted payoff process is a supermartingale, then it is optimal to stop immediately, i.e., \( \hat{\tau} = 0 \).
    \item If the discounted payoff process is a martingale, then all stopping times \( \tau \) with \( 0 \leq \tau \leq T \) are optimal.
\end{enumerate}
\end{proposition}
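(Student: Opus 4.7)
The plan is to treat each of the three cases as a direct consequence of Doob's Optional Sampling Theorem applied to the discounted payoff process, which I will denote $Y_t$. In the setting of interest every admissible stopping time satisfies $0 \le \tau \le T$ almost surely, so $\tau$ is bounded; the only remaining hypothesis needed for the Optional Sampling Theorem is integrability $\mathbb{E}[|Y_t|] < \infty$ at each $t$. For the Iron Condor application this comes for free, since the value process is bounded between its predetermined maximum loss and maximum gain, but I would state integrability explicitly at the beginning so that the three cases can invoke Doob's theorem in a single stroke.

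For part (1), the submartingale assumption together with the bounded stopping time $\tau$ yields $\mathbb{E}[Y_\tau] \le \mathbb{E}[Y_T]$ via optional sampling. Hence, over all admissible stopping times, the expected payoff is maximized by $\hat\tau = T$, establishing that late stopping is optimal. For part (2), reversing the inequality gives $\mathbb{E}[Y_\tau] \le \mathbb{E}[Y_0] = Y_0$ for every bounded stopping time, with equality attained at $\hat\tau = 0$, so immediate stopping is optimal. For part (3), the martingale property yields $\mathbb{E}[Y_\tau] = Y_0$ for every bounded $\tau$, so all admissible stopping times deliver the same expected payoff and are therefore simultaneously optimal. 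In each case, combining the inequality (or equality) from optional sampling with the definition of optimality in \eqref{eq:optimization_problem} closes the argument.

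There is no genuine obstacle in the finite-horizon setting beyond correctly citing the optional sampling theorem; the subtlety to watch is its scope. If one were to extend the argument to unbounded horizons or to stopping times that are not almost surely bounded, one would need either uniform integrability of the family $\{Y_\tau\}_\tau$ or a domination hypothesis to justify the inequalities above. Since the Iron Condor portfolios under consideration have a fixed expiration $T$ and bounded payoff, these technicalities can be dispensed with, and I would include a brief remark pointing this out so the reader understands precisely why "optimal" here really means maximizing (or, for part (3), realizing) the expected terminal utility rather than merely matching it in a limit.
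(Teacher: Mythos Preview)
Your argument via Doob's Optional Sampling Theorem is correct and is the standard proof of this classical fact. The paper, however, does not supply its own proof: it introduces Proposition~\ref{prop:optimal_stopping} with the phrase ``We now recall a foundational proposition in the theory of optimal stopping,'' treating it as a known result and invoking it directly in the proof of Theorem~\ref{the:hold_to_end}. So there is nothing to compare against; your proposal simply fills in what the paper leaves as background.
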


Using Lemma~\ref{lem:low_volatility} and Proposition~\ref{prop:optimal_stopping}, we establish the following theorem:

\begin{theorem}  
\label{the:hold_to_end}  
Let \( S_t \) be a martingale bounded by \( K_2 \) and \( K_3 \) for all \( t \in [0, T] \). Then, for any Iron Condor portfolio with the strike structure \( k_1 \leq k_2 = K_2 < k_3 = K_3 < k_4 \), the optimal stopping time is \( \tau = T \).  
\end{theorem}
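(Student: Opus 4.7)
The plan is to deduce Theorem~\ref{the:hold_to_end} as an almost immediate corollary of Lemma~\ref{lem:low_volatility} combined with Proposition~\ref{prop:optimal_stopping}, so the work is mostly in verifying that the hypotheses of the theorem line up with the hypotheses of the lemma. First I would observe that the strike choice $k_2 = K_2$ and $k_3 = K_3$ makes the profitable region of the Iron Condor, namely $[k_2,k_3]$, coincide exactly with the bounding interval $[K_2,K_3]$ of the martingale $S_t$. Consequently, for every $t \in [0,T]$ we have $S_t \in [k_2,k_3]$ almost surely, so the underlying never exits the profitable region.

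Next I would invoke Lemma~\ref{lem:low_volatility}: because $S_t$ is a martingale confined to $[K_2,K_3]$, the lemma applies verbatim and yields that the value process $P_{u,t}$ of the Iron Condor is a submartingale with respect to the filtration $\mathbb{F}$. Here I would want to be careful to match the ``discounted payoff'' language of Proposition~\ref{prop:optimal_stopping} with the undiscounted $P_{u,t}$ produced by the lemma; this is legitimate under the paper's working assumption $r=0$ (stated when calibrating the Rough Heston data generator), in which discounting is trivial and the discounted and undiscounted processes coincide. If one wished to phrase the result for general $r$, one would simply replace $P_{u,t}$ by $e^{-rt}P_{u,t}$ throughout and note that the time-decay argument in the lemma carries through unchanged, since the extrinsic-value decay dominates the discount factor on a bounded profitable region.

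With the submartingale property in hand, the proof is completed by applying part~1 of Proposition~\ref{prop:optimal_stopping}: for a submartingale payoff process, the optimal stopping time is the terminal date, so $\hat{\tau} = T$. The admissible stopping times are those in $[0,T]$ by the definition of $\tau$ in \eqref{eq:stopping_time}, and the boundary $\partial D$ is not hit before $T$ under the containment $S_t \in [k_2,k_3]$, so the supremum of $\mathbb{E}[P_{u,\tau}]$ over admissible $\tau$ is attained at $\tau=T$.

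The only non-routine point I anticipate is the alignment between ``martingale'' (the assumption on $S_t$) and ``submartingale'' (the conclusion on $P_{u,t}$), which already goes through Lemma~\ref{lem:low_volatility}, so no new obstacle arises here. The proof is therefore essentially a two-line composition of the lemma and the proposition, and the main thing to emphasize in the write-up is the precise geometric observation that the choices $k_2 = K_2$ and $k_3 = K_3$ (together with $k_1 \leq k_2$ and $k_3 < k_4$) place the entire sample path of $S_t$ inside the flat profit region of the Iron Condor, which is exactly the hypothesis needed by the lemma.
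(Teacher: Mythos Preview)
Your proposal is correct and follows essentially the same route as the paper: invoke Lemma~\ref{lem:low_volatility} to obtain that $P_{u,t}$ is a submartingale, then apply part~1 of Proposition~\ref{prop:optimal_stopping} to conclude $\tau = T$. Your write-up is in fact more careful than the paper's, since you spell out why $k_2=K_2$, $k_3=K_3$ forces $S_t$ into the profitable region and you address the discounting convention $r=0$, but the underlying argument is identical.
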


\begin{proof}
Define the Iron Condor portfolio values \( P_{u,T} \) as the discounted payoff process adapted to \( S_t \), where
\begin{equation}
\mathbb{E}[S_{t+1} \mid \mathcal{F}_t] = S_t,
\label{eq:st_martingale}
\end{equation}
subject to \( K_2 \leq S_t \leq K_3 \) for all \( t \in [0, T] \).

From Lemma~\ref{lem:low_volatility}, \( P_{u,T} \) is a submartingale. By Proposition~\ref{prop:optimal_stopping}, the optimal stopping time is \( \tau = T \). This concludes the proof of Theorem~\ref{the:hold_to_end}.
\end{proof}

\section{Simulation Research on Symmetric Iron Condor}

\subsection{Influence of Moneyness}
In this section, Moneyness $x$ is the sole variable determining the control process $u$. Figures \ref{Fig:ms_1}, \ref{Fig:ms_4}, and \ref{Fig:ms_7} depict the distributions of potential profits $\phi_{t,u}(\omega)$ and risks $\eta_{t,u}(\omega)$ for $x$ values of 0.12, 0.06, and 0.00, respectively.
The red line in each figure is the expectation process, denoted by $\phi_{t,u}$ or  $\eta_{t,u}$.

The options in Figure \ref{Fig:ms_1} are deep Out-Of-The-Money (OTM), resulting in the $\phi_{t,u}(\omega)$ distribution spanning a wide range of values from [-4,1] and $\phi_{t,u} \approx 0$ for all $t \in [0,T]$ (Figure \ref{Fig:ms_1} (a)).  
In a sideways market (Figure \ref{Fig:ms_1}(b)), $\phi_{t,u}$ exhibits a convex increase. However, the risk associated with this portfolio is significant, as illustrated in Figures \ref{Fig:ms_1}(c) and \ref{Fig:ms_1}(d), where $\eta_{t,u}$ shows a concave decrease as time approaches the option expiration date.

\begin{figure}[H]
	\centering
	\includegraphics[trim=0cm 0cm 0cm 0cm,clip=true,width=10cm]{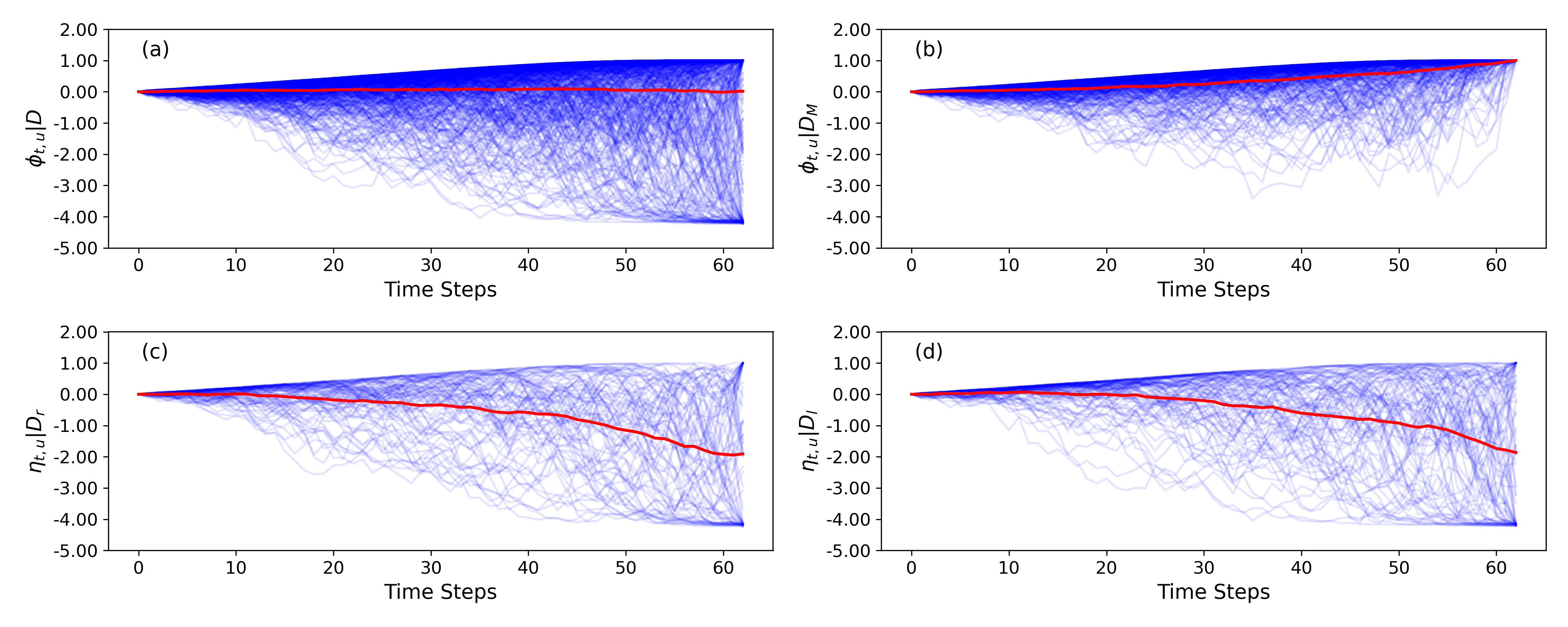}
	\vspace*{-3mm}
	\caption{Distributions of (a) $\phi_{t,u}(\omega)|D$; (b) $\phi_{t,u}(\omega)|D_M$; (c) $\eta_{t,u}(\omega)|D_r$ and (d) $\eta_{t,u}(\omega)|D_l$ for the deep OTM portfolio with $x=0.16,  k=[0.84, 0.88, 1.12, 1.16]$. The red lines represent the expectations.}
	\label{Fig:ms_1}
\end{figure}

In contrast, the options in Figure \ref{Fig:ms_4} are slightly OTM, meaning their strike prices are closer to the underlying price compared to the previous case. In Figure \ref{Fig:ms_4}(a), the distribution of $\phi_{t,u}(\omega)$ becomes denser within the range of [-1.2,1]. In Figures \ref{Fig:ms_4}(c) and \ref{Fig:ms_4}(d), both $\eta_{t,u}|D_r$ and $\eta_{t,u}|D_l$ exhibit a concave decrease over time, but the maximum loss is significantly reduced compared to that shown in Figures \ref{Fig:ms_1}(c) and \ref{Fig:ms_1}(d).

\begin{figure}[H]
	\centering
	\includegraphics[trim=0cm 0cm 0cm 0cm,clip=true,width=10cm]{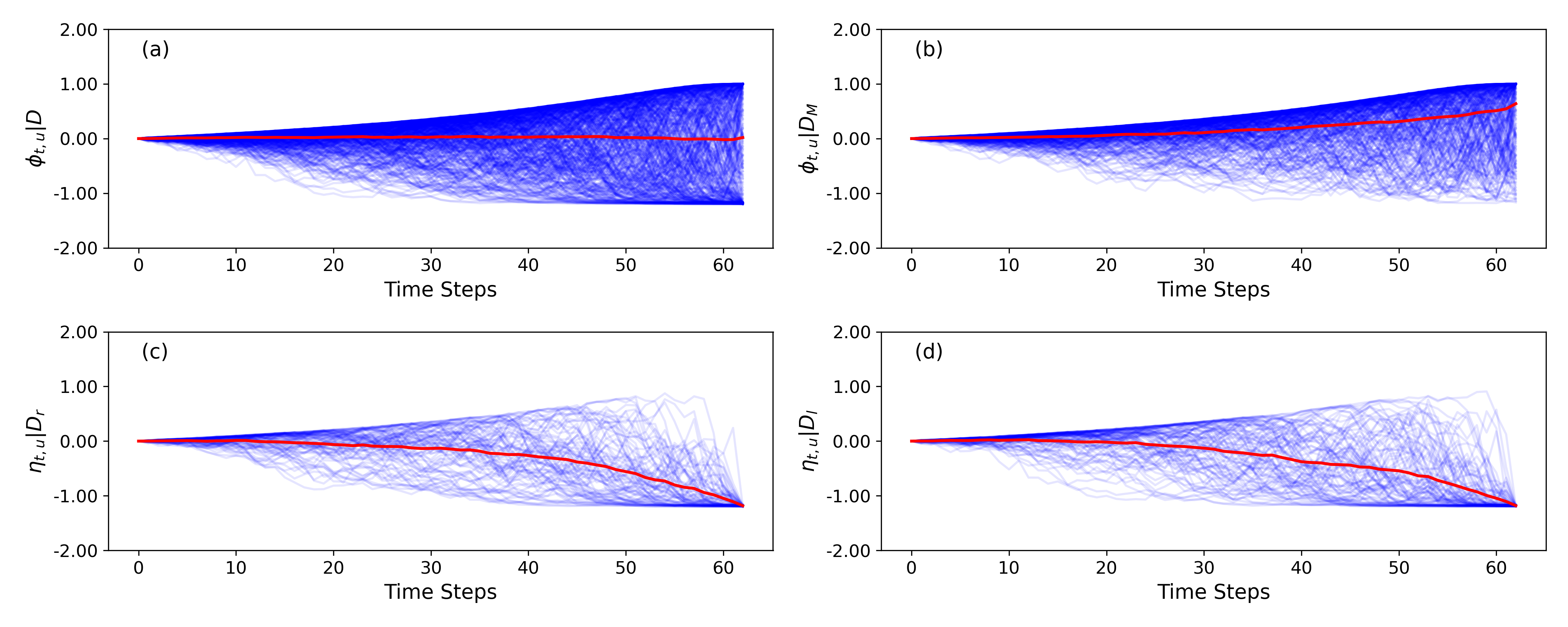}
	\vspace*{-3mm}
	\caption{Distributions of (a) $\phi_{t,u}(\omega)|D$; (b) $\phi_{t,u}(\omega)|D_M$; (c) $\eta_{t,u}(\omega)|D_r$ and (d) $\eta_{t,u}(\omega)|D_l$ for the slightly OTM portfolio with $x=0.10, k=[0.9, 0.94, 1.06, 1.1]$. The red lines represent the expectations.}
	\label{Fig:ms_4}
\end{figure}

Figure \ref{Fig:ms_7} presents the at-the-money (ATM) case, where $x=0$ and $k_2 = S_0 = k_3$, transforming the Iron Condor portfolio into an Iron Butterfly structure.  
The distribution of $\phi_{t,u}(\omega)$ in Figure \ref{Fig:ms_7}(a) is particularly appealing to investors, as it retains the maximum profit potential (albeit with a low probability) while capping the maximum loss at -0.2. However, Figure \ref{Fig:ms_7} shows that even under profitable market conditions, the $\phi_{t,u}|D_M$ is significantly lower compared to the previous cases where $x=0.12$ and $x=0.06$.

\begin{figure}[H]
	\centering
	\includegraphics[trim=0cm 0cm 0cm 0cm,clip=true,width=10cm]{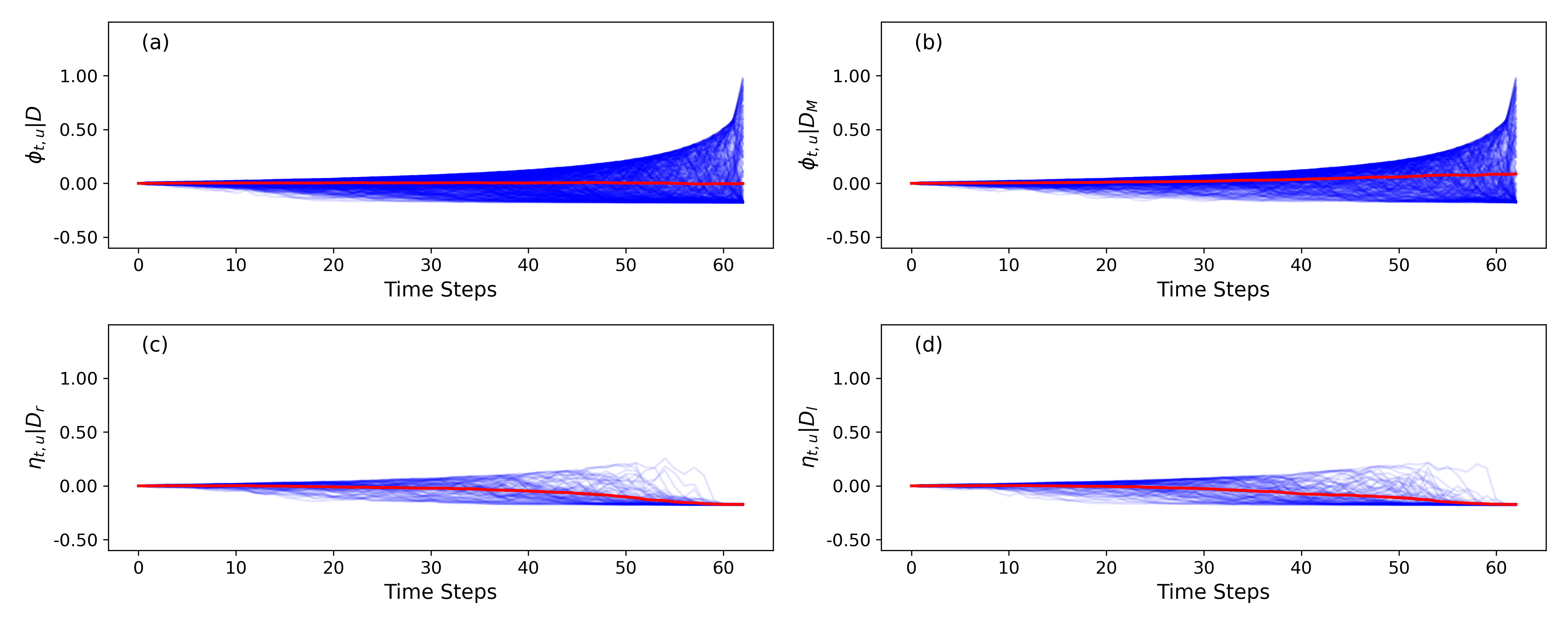}
	\vspace*{-3mm}
	\caption{Distributions of (a) $\phi_{t,u}(\omega)|D$; (b) $\phi_{t,u}(\omega)|D_M$; (c) $\eta_{t,u}(\omega)|D_r$ and (d) $\eta_{t,u}(\omega)|D_l$  for the At-The-Money (ATM) portfolio with $x=0.00, k=[0.96, 1.0, 1.0, 1.04]$. The red lines represent the expectations.}
	\label{Fig:ms_7}
\end{figure}

Figure \ref{Fig:ms} summarizes the influence of $x$ on portfolio performance.  
In Figure \ref{Fig:ms}(a), $\phi_{\tau,u}|D$ increases exponentially with $x$. The growing deviation between $\phi_{\tau,u}|D$ and $\phi_{T,u}|D$ as $x$ increases highlights the importance of employing an optimal stopping strategy for deep OTM portfolios.  

In Figure \ref{Fig:ms}(b), $\phi_{\tau,u}|D_M$ underperforms $\phi_{T,u}|D_M$, suggesting that applying an optimal stopping strategy in sideways market conditions reduces profitability. This observation is consistent with the theoretical proof presented in Section 3.  

In Figure \ref{Fig:ms}(c), both $\theta_{T,u}|D$ and $\theta_{\tau,u}|D$ increase with $x$. This is because increasing $x$ widens the span between $k_2$ and $k_3$, thereby improving the success ratio. However, an interesting crossover is observed between $\theta_{T,u}|D$ and $\theta_{\tau,u}|D$, indicating that for deep OTM options, holding the portfolio until expiration outperforms early stopping at the optimal time $\tau$.  

In Figure \ref{Fig:ms}(d), $\theta_{T,u}$ and $\theta_{\tau,u}$ exhibit similar performance under the $D_r$ and $D_l$ datasets. Notably, $\theta_{\tau,u}$ consistently outperforms $\theta_{T,u}$, indicating that applying an optimal stopping strategy generally reduces risks in volatile market conditions. For deep OTM options with $x > 0.1$, employing an optimal stopping strategy can significantly truncate risk within the range of -1.0.  

\begin{figure}[H]
	\centering
	\includegraphics[trim=0cm 0cm 0cm 0cm,clip=true,width=10cm]{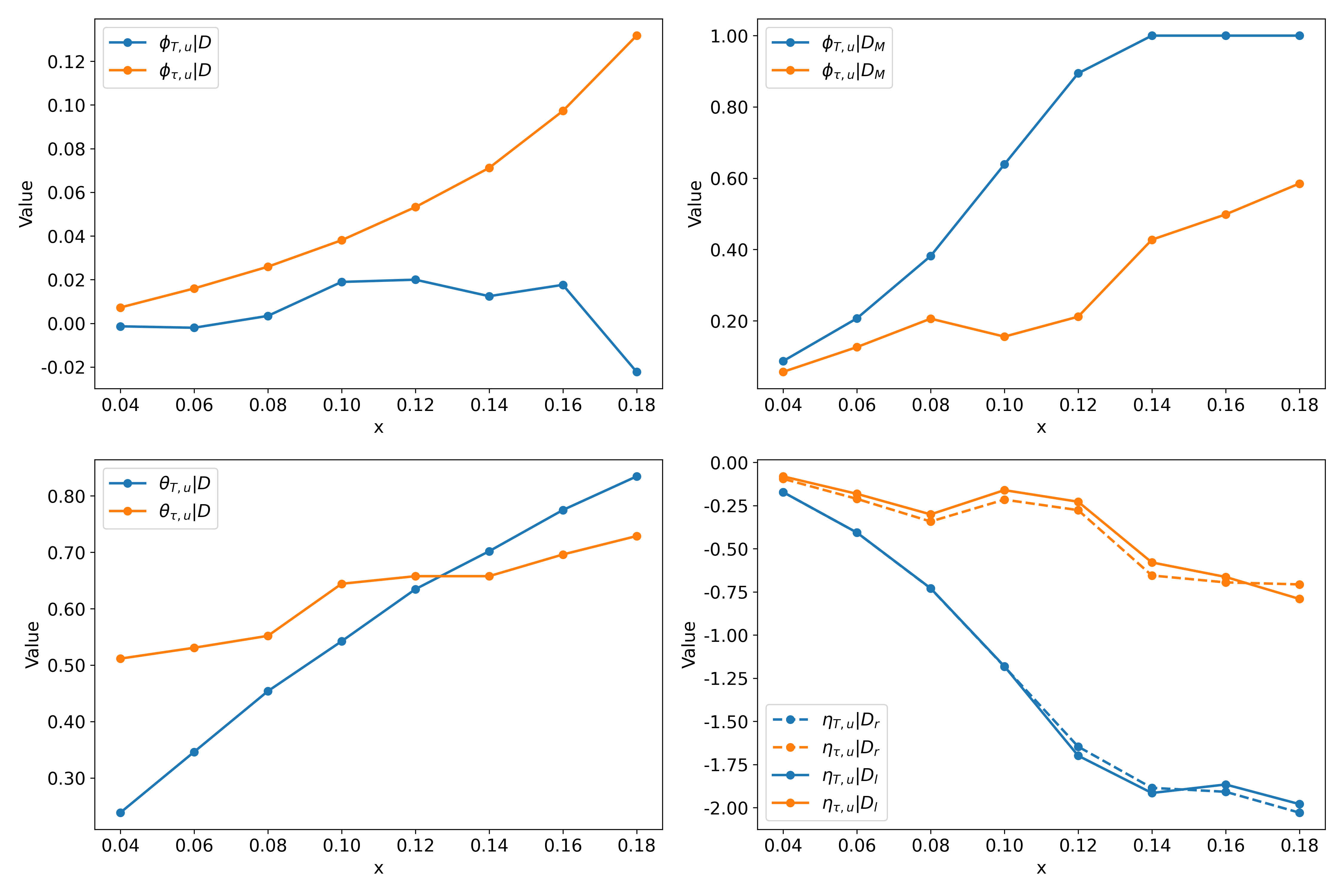}
	\vspace*{-3mm}
	\caption{Influence of $x$ on (a) potential profits $\phi_{T,u}|D$ and $\phi_{\tau,u}|D$; (b) potential profits $\phi_{T,u}|D_M$ and $\phi_{\tau,u}|D_M$; (c) success rates $\theta_{T,u}|D$ and $\theta_{\tau,u}|D$; and (d) risks $\eta_{T,u}|D_r$ and $\eta_{T,u}|D_l$, respectively. }
	\label{Fig:ms}
\end{figure}

Table \ref{tab:ms_metrics} provides comprehensive information about the performance of Iron Condor portfolios under various values of $x$. The metrics $\phi_{\tau,u}|D$, $\theta_{T,u}|D$, $\theta_{\tau,u}|D$, $\phi_{T,u}|D_M$, and $\phi_{\tau,u}|D_M$ exhibit a positive relationship with $x$, while all risk metrics, i.e., $\eta_{T,u}|D_r$, $\eta_{\tau,u}|D_r$, $\eta_{T,u}|D_l$, and $\eta_{\tau,u}|D_l$, show a negative relationship with $x$.  

Notably, the optimal stopping time $\tau$ ranges from 34 to 47 days out of the total 63-day period, corresponding to approximately 54\% to 75\% of the entire duration.  

Interestingly, although employing the optimal stopping strategy reduces profitability in the sideways market ($D_M$), it generally truncates risks and enhances overall profitability on $D$, as evidenced by the $\theta_{\tau,u}|D$ metric.

\begin{table}[h!]
	\centering
        \caption{Portfolio performance metrics for different moneyness}
	\resizebox{\textwidth}{!}{
		\begin{tabular}{lccccccccccc}
			\toprule
			\textbf{$x$} & \textbf{$\phi_{T,u}|D$} & \textbf{$\phi_{\tau,u}|D$} & \textbf{$\tau_{u}|D$} & \textbf{$\theta_{T,u}|D$} & \textbf{$\theta_{\tau,u}|D$} & \textbf{$\phi_{T,u}|D_M$} & \textbf{$\phi_{\tau, u}|D_M$} & \textbf{$\eta_{T,u}|D_r$} & \textbf{$\eta_{\tau,u}|D_r$} & \textbf{$\eta_{T,u}|D_l$} & \textbf{$\eta_{\tau,u}|D_l$} \\
			\midrule
			0.18 & -0.02 & 0.13 & 43 & 0.83 & 0.73 & 1.00 & 0.59 & -2.03 & -0.71 & -1.98 & -0.79 \\
			0.16 & 0.02 & 0.10 & 43 & 0.78 & 0.70 & 1.00 & 0.50 & -1.91 & -0.70 & -1.87 & -0.66 \\
			0.14 & 0.01 & 0.07 & 44 & 0.70 & 0.66 & 1.00 & 0.43 & -1.89 & -0.66 & -1.92 & -0.58 \\
			0.12 & 0.02 & 0.05 & 34 & 0.63 & 0.66 & 0.89 & 0.21 & -1.65 & -0.28 & -1.70 & -0.23 \\
			0.10 & 0.02 & 0.04 & 34 & 0.54 & 0.64 & 0.64 & 0.16 & -1.18 & -0.22 & -1.18 & -0.16 \\
			0.08 & 0.00 & 0.03 & 47 & 0.45 & 0.55 & 0.38 & 0.21 & -0.73 & -0.34 & -0.73 & -0.30 \\
			0.06 & -0.00 & 0.02 & 47 & 0.35 & 0.53 & 0.21 & 0.13 & -0.41 & -0.21 & -0.41 & -0.18 \\
			0.00 & -0.00 & 0.01 & 47 & 0.24 & 0.51 & 0.09 & 0.06 & -0.17 & -0.10 & -0.17 & -0.08 \\
			\bottomrule
		\end{tabular}
	}
	\label{tab:ms_metrics}
\end{table}

\subsection{Influence of Strikes Span}

In this section, we focus on the strikes span \(\hat{x}\), which governs the slope of the call and put spreads in an Iron Condor strategy.  

Figure \ref{Fig:ss_1} presents a narrow-span case with $\hat{x}=0.02$. The risks ($\eta_{t,u}|D_r$ and $\eta_{t,u}|D_l$) are well controlled, as shown in Figures \ref{Fig:ss_1}(c) and (d). However, $\phi_{t,u}|D_M$ is approximately 0, indicating minimal profit potential.  

\begin{figure}[H]
	\centering
	\includegraphics[trim=0cm 0cm 0cm 0cm,clip=true,width=10cm]{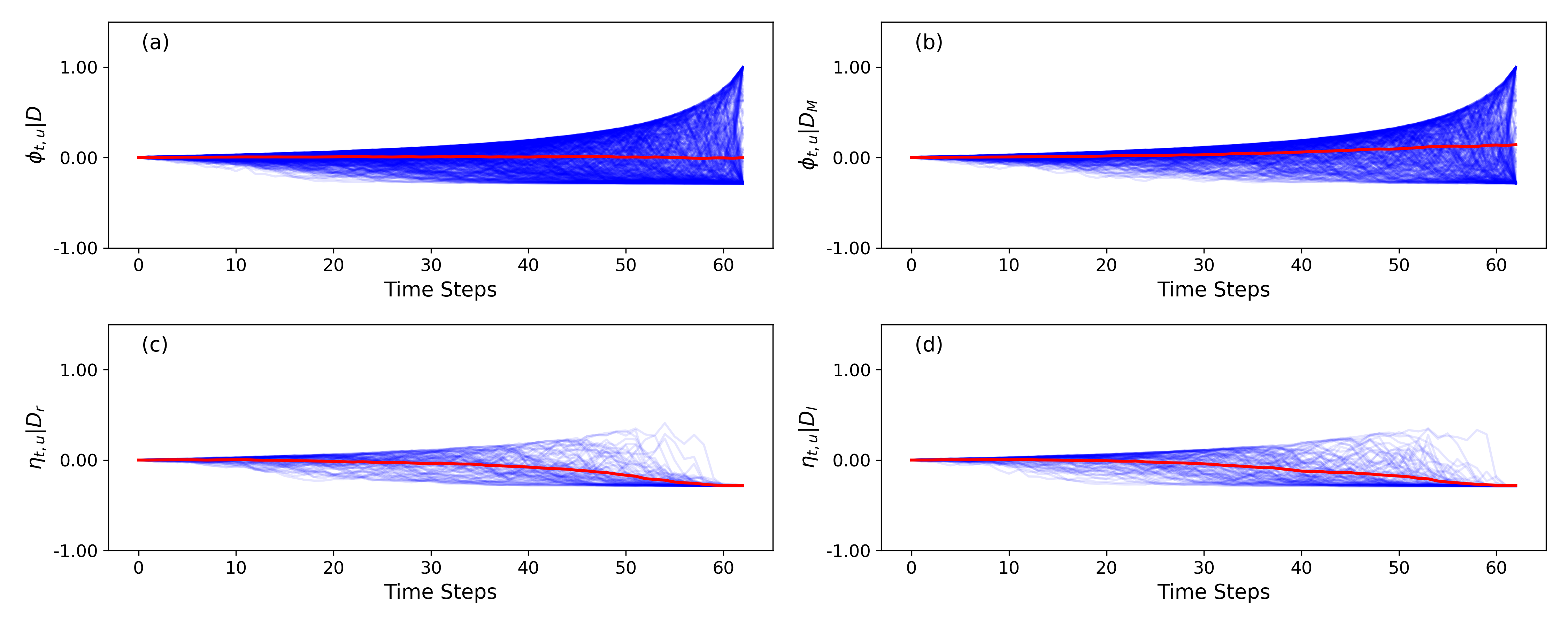}
	\vspace*{-3mm}
	\caption{Distributions of (a) $\phi_{t,u}(\omega)|D$; (b) $\phi_{t,u}(\omega)|D_M$; (c) $\eta_{t,u}|D_r$ and (d) $\eta_{t,u}|D_l$ for the narrow-span portfolio with $\hat{x}=0.02, k=[0.96, 0.98, 1.02, 1.04]$. The red lines represent the expectations.}
	\label{Fig:ss_1}
\end{figure}

In contrast, Figure \ref{Fig:ss_4} illustrates the distribution of $\phi_{t,u}(\omega)$ in a wider-span case with $\hat{x}=0.14$. Here, $\phi_{t,u}|D_M$ exhibits a convex increase (Figure \ref{Fig:ss_4}(b)), but the risks are significantly elevated compared to the $\hat{x}=0.02$ case, as demonstrated in Figures \ref{Fig:ss_4}(c) and (d).

\begin{figure}[H]
	\centering
	\includegraphics[trim=0cm 0cm 0cm 0cm,clip=true,width=10cm]{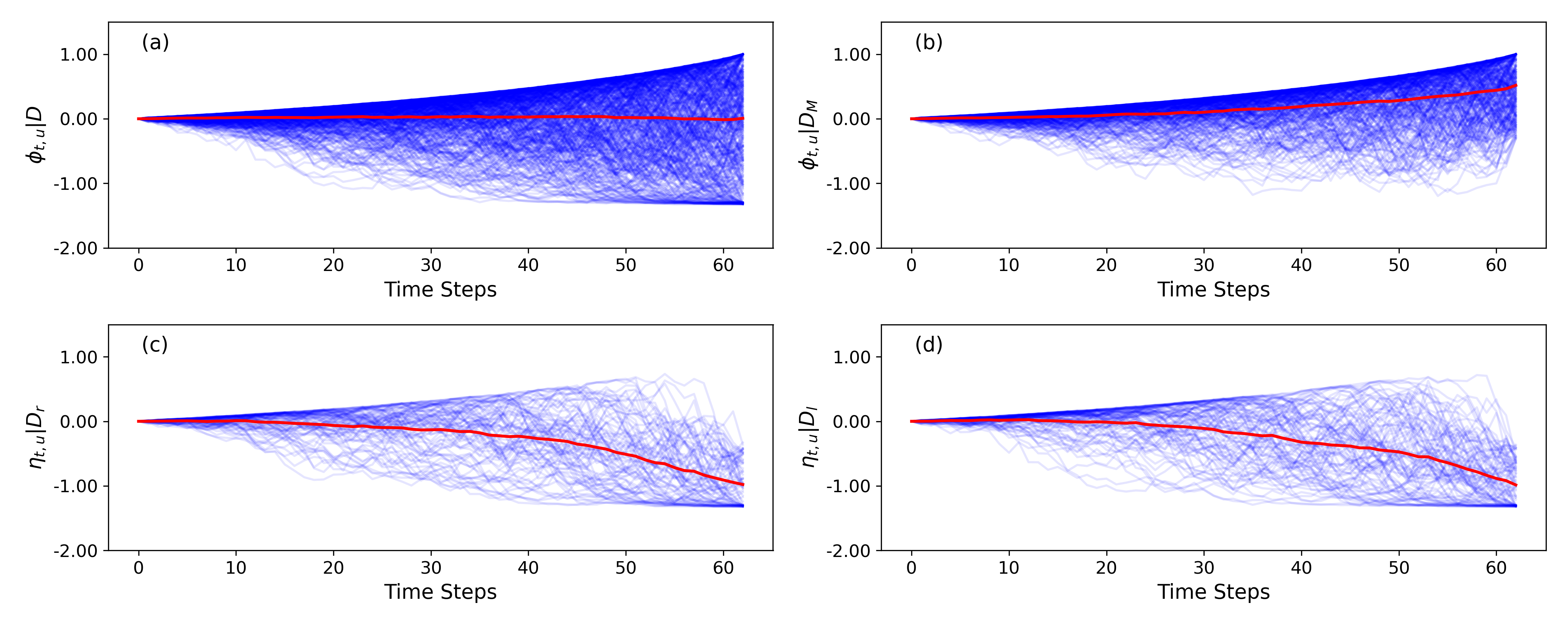}
	\vspace*{-3mm}
	\caption{
    Distributions of (a) $\phi_{t,u}(\omega)|D$; (b) $\phi_{t,u}(\omega)|D_M$; (c) $\eta_{t,u}|D_r$ and (d) $\eta_{t,u}|D_l$ for the wide-span portfolio with  $\hat{x}=0.14, k=[0.84, 0.98, 1.02, 1.16]$. The red lines represent the expectations.}
	\label{Fig:ss_4}
\end{figure}
 
Figure \ref{Fig:ss} illustrates the impact of \(\hat{x}\) on portfolio performance.  In Figure \ref{Fig:ss}(a), $\phi_{\tau,u}|D$ exhibits a linear increase with \(\hat{x}\), highlighting the significance of employing optimal stopping strategies for wide-span portfolios.  

Under sideways market conditions, as shown in Figure \ref{Fig:ss}(b), all optimal stopping strategies determined on $D$ underperform $\phi_{T,u}|D_M$, consistent with the theoretical analysis presented in Section 3.  

Figure \ref{Fig:ss}(c) demonstrates that the success ratio $\theta_{\tau,u}|D$ consistently outperforms $\theta_{T,u}|D$ across all control values of \(\hat{x}\), indicating that optimal stopping strategies enhance the likelihood of portfolio success.  

Furthermore, Figure \ref{Fig:ss}(d) presents the risk metrics in volatile markets. It is evident that for both $D_r$ and $D_l$, the optimal stopping strategies effectively reduce risks.

\begin{figure}[H]
	\centering
	\includegraphics[trim=0cm 0cm 0cm 0cm,clip=true,width=10cm]{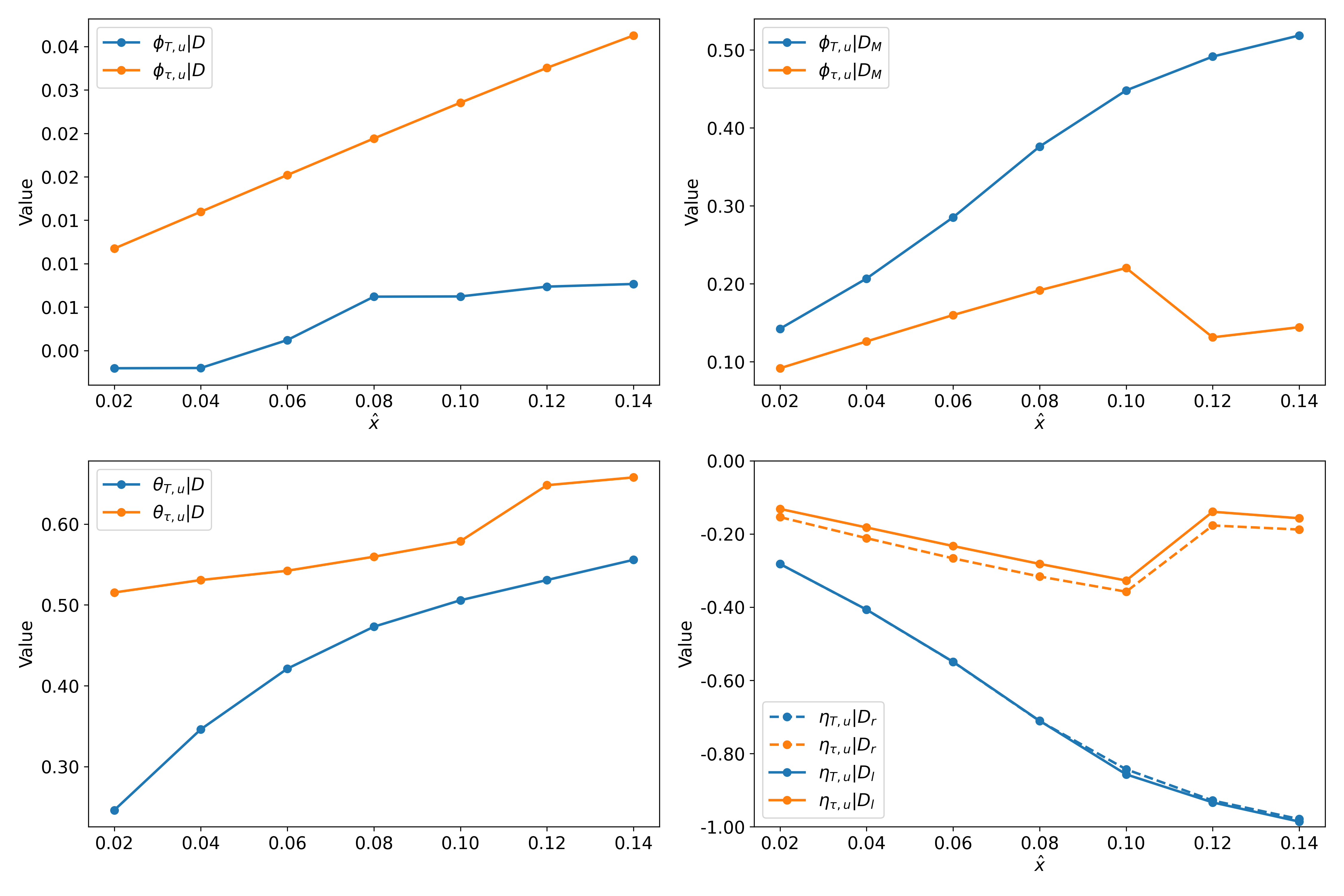}
	\vspace*{-3mm}
	\caption{Influence of $\hat{x}$ on (a) potential profits $\phi_{T,u}|D$ and $\phi_{\tau,u}|D$; (b) potential profits $\phi_{T,u}|D_M$ and $\phi_{\tau,u}|D_M$; (c) success rates $\theta_{T,u}|D$ and $\theta_{\tau,u}|D$; and (d) risks $\eta_{T,u}|D_r$ and $\eta_{T,u}|D_l$, respectively. }
	\label{Fig:ss}
\end{figure}

Table \ref{tab:ss_metrics} summarizes the performance metrics of Iron Condor portfolios for various values of $\hat{x}$.  
Overall, it can be observed that $\phi_{\tau,u}|D$, $\theta_{T,u}|D$, $\theta_{\tau,u}|D$, and $\phi_{T,u}|D_M$ increase as $\hat{x}$ grows, indicating higher potential profits and success rates for wide-span cases. However, $\eta_{T,u}|D_r$ and $\eta_{T,u}|D_l$ also increase with $\hat{x}$, reflecting elevated risks for wider spans.  
The optimal stopping time $\tau$ varies from 34 to 47 days out of the total 63-day period, representing approximately 54\% to 75\% of the entire duration. Implementing optimal stopping strategies slightly improves potential profits while significantly reducing risk magnitudes.

\begin{table}[h!]
	\centering
        \caption{Portfolio performance metrics for different spans}
	\resizebox{\textwidth}{!}{
		\begin{tabular}{lccccccccccc}
			\toprule
			\textbf{$\hat{x}$} & \textbf{$\phi_{T,u}|D$} & \textbf{$\phi_{\tau,u}|D$} & \textbf{$\tau_{u}|D$} & \textbf{$\theta_{T,u}|D$} & \textbf{$\theta_{\tau,u}|D$} & \textbf{$\phi_{T,u}|D_M$} & \textbf{$\phi_{\tau, u}|D_M$} & \textbf{$\eta_{T,u}|D_r$} & \textbf{$\eta_{\tau,u}|D_r$} & \textbf{$\eta_{T,u}|D_l$} & \textbf{$\eta_{\tau,u}|D_l$} \\
			\midrule
		0.02 & 0.00 & 0.00 & 47 &  0.51 &  0.24& 0.14 & 0.09 &  -0.28& -0.15&-0.28&-0.19 \\
		0.04 & 0.00 & 0.01 & 47 & 0.53 &  0.34& 0.20 & 0.12 & -0.40& -0.21& -0.40 &-0.18\\
		0.06 & 0.00 & 0.02 & 47 & 0.54  &  0.42& 0.28 & 0.15 & -0.54&-0.27 &-0.54&-0.23\\
		0.08 & 0.01 & 0.02 & 47  &0.55  & 0.47 & 0.37 & 0.19 & -0.71&-0.32 & -0.71&-0.28 \\
		0.10 & 0.01  &0.03 & 47  & 0.50 &  0.57& 0.44 &  0.22 & -0.84& -0.36&  -0.85&-0.33 \\
		0.12 & 0.01 & 0.03 & 34  & 0.53 &0.64 &0.49 & 0.13 & -0.92&-0.18 & -0.93 &-0.14\\
		0.14 & 0.01 & 0.03 & 34  & 0.55& 0.65 & 0.51 & 0.14&-0.97&-0.19 &-0.98 &-0.16\\
			\bottomrule
		\end{tabular}
	}
	\label{tab:ss_metrics}
\end{table}

\section{Simulation Research on Asymmetry Iron Condor}

This section examines the dynamics of an intriguing Asymmetric Iron Condor portfolio, which exhibits non-martingale behavior based on our simulation results. We define the asymmetry of an Iron Condor portfolio as the unbalanced moneyness between the bull put spread and the bear call spread. When \(\bar{x} > 0\), indicating that the put spread is deeper OTM, the portfolio is referred to as left-biased. Conversely, when \(\bar{x} < 0\), indicating that the call spread is deeper OTM, the portfolio is called right-biased.  

To clarify the analysis, we first present a balanced baseline in Figure \ref{Fig:hs_3} and then provide a comparative discussion of the left-biased and right-biased portfolios in Figures \ref{Fig:hs_0} and \ref{Fig:hs_6}

\begin{figure}[H]
	\centering
	\includegraphics[trim=0cm 0cm 0cm 0cm,clip=true,width=10cm]{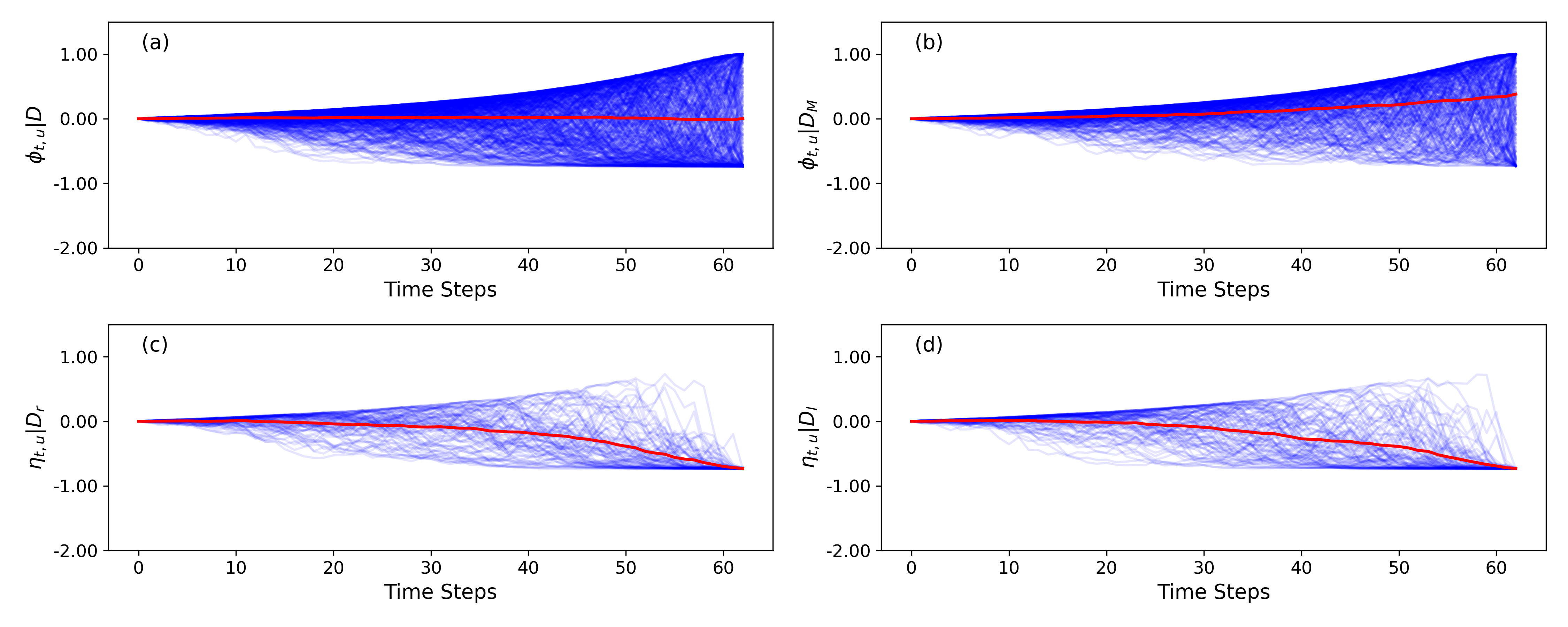}
	\vspace*{-3mm}
	\caption{Distributions of (a) $\phi_{t,u}(\omega)|D$; (b) $\phi_{t,u}(\omega)|D_M$; (c) $\eta_{t,u}|D_r$ and (d) $\eta_{t,u}|D_l$ for the symmetric portfolio with  $\bar{x}=0, k=[0.92, 0.96, 1.04, 1.08]$. The red lines represent the expectations.}
	\label{Fig:hs_3}
\end{figure}

As shown in Figure \ref{Fig:hs_0}(a), $\phi_{t,u}|D$ exhibits a slightly concave shape, suggesting non-martingale properties that may indicate arbitrage opportunities. Furthermore, $\eta_{t,u}|D_r$ and $\eta_{t,u}|D_l$ become asymmetric: the risk of $\eta_{t,u}|D_r$ increases significantly compared to the symmetric case shown in Figure \ref{Fig:hs_3}(c), while the values of $\eta_{t,u}|D_l$ also increase compared to Figure \ref{Fig:hs_3}(d). Notably, $\eta_{t,u}|D_l$ achieves positive values within $t \in [0, 45]$, indicating reduced risk during this time interval.  

\begin{figure}[H]
	\centering
	\includegraphics[trim=0cm 0cm 0cm 0cm,clip=true,width=10cm]{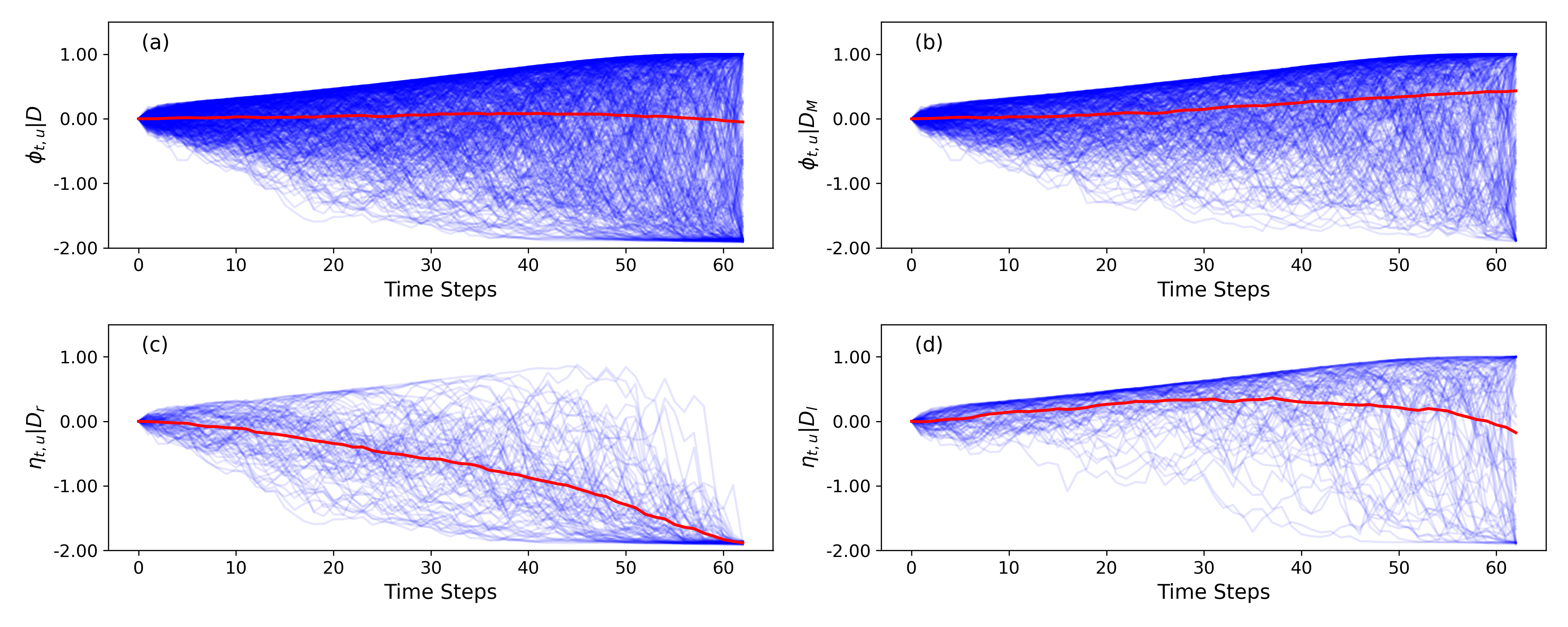}
	\vspace*{-3mm}
	\caption{Distributions of  (a) $\phi_{t,u}(\omega)|D$; (b) $\phi_{t,u}(\omega)|D_M$; (c) $\eta_{t,u}|D_r$ and (d) $\eta_{t,u}|D_l$ for the left-based portfolio with $\bar{x}=0.10, k=[0.82, 0.86, 1.04, 1.08]$. The red lines represent the expectations.}
	\label{Fig:hs_0}
\end{figure}

Figure \ref{Fig:hs_6} illustrates the right-biased case. Comparing Figure \ref{Fig:hs_6}(a) with Figure \ref{Fig:hs_3}(a), the Asymmetric Iron Condor introduces additional risk. Specifically, $\phi_{t,u}|D_M$ exhibits a convex increase. The risk metric $\eta_{t,u}|D_r$, shown in Figure \ref{Fig:hs_6}(c), displays a concave shape and achieves positive values before the 50th day, which is similar to the behavior of $\eta_{t,u}|D_l$ observed in Figure \ref{Fig:hs_0}(d).  
However, in Figure \ref{Fig:hs_6}(d), $\eta_{t,u}|D_l$ expands compared to that in Figure \ref{Fig:hs_3}(d), indicating increased risk.  

\begin{figure}[H]
	\centering
	\includegraphics[trim=0cm 0cm 0cm 0cm,clip=true,width=10cm]{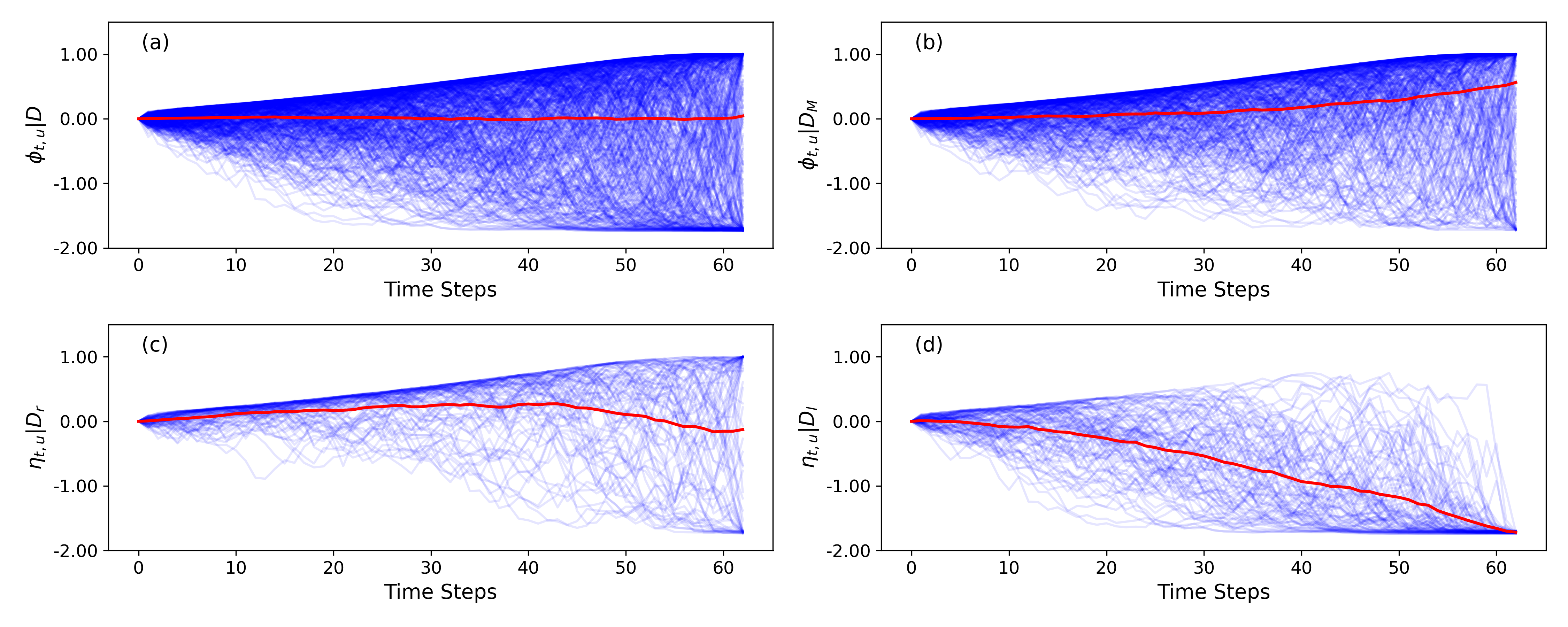}
	\vspace*{-3mm}
	\caption{Distributions of  (a) $\phi_{t,u}(\omega)|D$; (b) $\phi_{t,u}(\omega)|D_M$; (c) $\eta_{t,u}|D_r$ and (d) $\eta_{t,u}|D_l$ for the right-based portfolio with $\bar{x}=-0.10, k=[0.92, 0.96, 1.14, 1.18]$. The red lines represent the expectations.}
	\label{Fig:hs_6}
\end{figure}

Figure \ref{Fig:hs} illustrates the impact of asymmetry on portfolio performance. In Figure \ref{Fig:hs}(a), for left-biased portfolios, $\phi_{T,u}|D$ and $\phi_{\tau,u}|D$ overlap, indicating $\tau = T$. Moreover, the large values of $\phi_{T,u}|D$ occurs in $x<0$ suggests that left-biased portfolios are more profitable.  
In contrast, for right-biased portfolios, $\phi_{T,u}|D$ underperforms the symmetric case, while $\phi_{\tau,u}|D$ surpasses both the symmetric case and the left-biased portfolios. This highlights the importance of employing an optimal stopping strategy in right-biased portfolios.  

Under sideways market conditions, as shown in Figure \ref{Fig:hs}(b), the left-biased portfolio significantly outperforms other scenarios. Furthermore, $\theta_{T,u}|D$ and $\theta_{\tau,u}|D$ achieve identical values, indicating that the optimal stopping time for the left-biased portfolio under sideways market conditions is $T$.

Despite the data generator using tuned parameters from the real SPX market, some imperfect symmetric patterns can still be observed. In Figure \ref{Fig:hs}(c), both $\theta_{T,u}|D$ and $\theta_{\tau,u}|D$ exhibit even symmetry around the symmetric portfolio. This indicates that asymmetric portfolios can achieve higher success rates compared to symmetric portfolios in the SPX market, although the latter case is more complex to analyze.

Moreover, in Figure \ref{Fig:hs}(d), $\eta_{T,u}|D_r$ and $\eta_{T,u}|D_l$ display odd symmetry about the symmetric portfolio. For the left-biased portfolio, $\eta_{\tau,u}|D_l$ outperforms $\eta_{T,u}|D_l$, while $\eta_{\tau,u}|D_r$ underperforms $\eta_{T,u}|D_r$. This implies that when trading a left-biased portfolio, one should adopt an optimal stopping strategy in bearish markets but refrain from using such a strategy in bullish markets. Conversely, for the right-biased portfolio, the strategy should be reversed.

\begin{figure}[H]
	\centering
	\includegraphics[trim=0cm 0cm 0cm 0cm,clip=true,width=10cm]{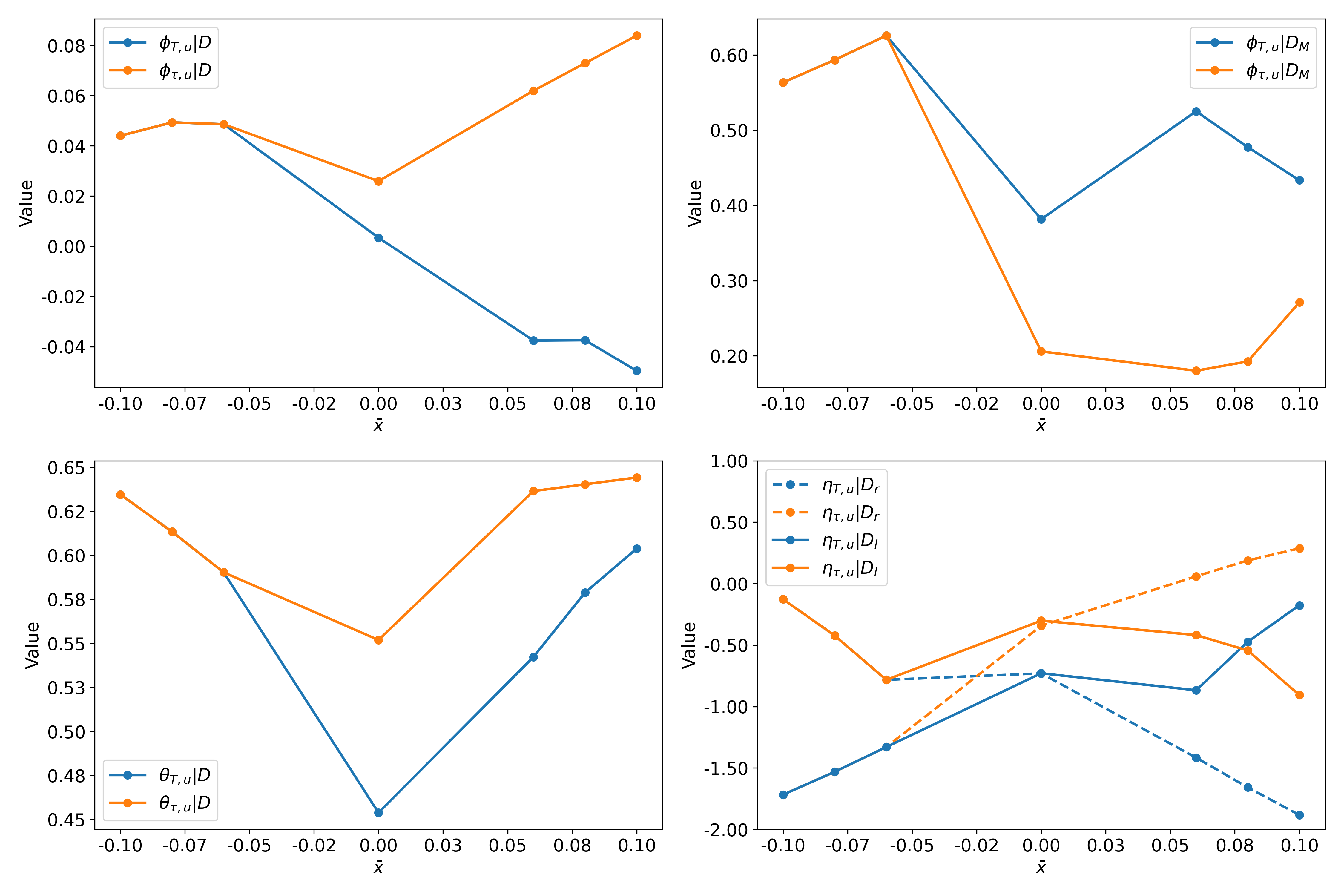}
	\vspace*{-3mm}
	\caption{Influence of $\bar{x}$ on (a) potential profits $\phi_{T,u}|D$ and $\phi_{\tau,u}|D$; (b) potential profits $\phi_{T,u}|D_M$ and $\phi_{\tau,u}|D_M$; (c) success rates $\theta_{T,u}|D$ and $\theta_{\tau,u}|D$; (d) risks $\eta_{T,u}|D_r$, $\eta_{\tau,u}|D_r$, $\eta_{T,u}|D_l$ (dash), and $\eta_{\tau,u}|D_l$ (dash), respectively. }
	\label{Fig:hs}
\end{figure}

Table \ref{tab:hs_metrics} summarizes the influence of \(\bar{x}\) on portfolio performance. Overall, the left-biased portfolio is observed to outperform the right-biased portfolio for the following reasons:  
1. \(\phi_{T,u}|D > 0\) for all \(\bar{x} < 0\), whereas \(\phi_{T,u}|D < 0\) when \(\bar{x} > 0\).  
2. \(\phi_{\tau,u}|D_M\) of the left-biased portfolio strictly dominates that of the right-biased portfolio.  
3. Under the same absolute level of \(|\bar{x}|\), \(\theta_{T,u}|D\) of the left-biased portfolio conditionally dominates that of the right-biased portfolio.  
4. The maximum risk of the left-biased portfolio is lower than that of the right-biased ones.  

However, the right-biased Iron Condor heavily depends on precise optimal stopping time to truncate extreme losses before expiration, which may cause \( \phi_{\tau,u}|D, x > 0 \) to be dominated by \( \phi_{T,u}|D, x \leq 0 \). Nevertheless, it leads to \( \eta_{\tau,u}|D_r, x > 0 \) strictly dominating \( \eta_{\tau,u}|D_l, x \leq 0 \), and \( \eta_{\tau,u}|D_l, x > 0 \) strictly dominating \( \eta_{\tau,u}|D_r, x \leq 0 \).

The results are intriguing, and we aim to analyze them further.  
Since the S\&P 500 index is predominantly bullish over time, establishing a left-biased portfolio provides greater tolerance for the upward movement of the underlying price while ensuring the terminal price remains within the profitable region. As a result, the left-biased Iron Condor is appealing for long-term holding and consistently outperforms the commonly used symmetric Iron Condor strategy.  

From Table \ref{tab:hs_metrics}, the optimal stopping time \(\tau\) for the right-biased portfolio varies between 34 and 41 days out of the total 63-day period, representing approximately 54\% to 65\% of the entire duration.

\begin{table}[h!]
	\centering
        \caption{Portfolio performance metrics for different asymmetry degrees}
	\resizebox{\textwidth}{!}{
		\begin{tabular}{lccccccccccc}
			\toprule
			\textbf{$\bar{x}$} & \textbf{$\phi_{T,u}|D$} & \textbf{$\phi_{\tau,u}|D$} & \textbf{$\tau_{u}|D$} & \textbf{$\theta_{T,u}|D$} & \textbf{$\theta_{\tau,u}|D$} & \textbf{$\phi_{T,u}|D_M$} & \textbf{$\phi_{\tau, u}|D_M$} & \textbf{$\eta_{T,u}|D_r$} & \textbf{$\eta_{\tau,u}|D_r$} & \textbf{$\eta_{T,u}|D_l$} & \textbf{$\eta_{\tau,u}|D_l$} \\
			\midrule
		0.10 & -0.05 & 0.08 & 41 & 0.60 & 0.64 & 0.43 & 0.27 & -1.88  & 0.29& -0.18&-0.91 \\
		0.08 &-0.04 & 0.07 & 34 & 0.58 & 0.64 & 0.48 & 0.19 & -1.66 &0.19 & -0.47&-0.54 \\
		0.06 &-0.04 & 0.06 & 34 & 0.54 & 0.64 & 0.52 & 0.18 & -1.42 & 0.06 & -0.87& -0.42\\
		0.00 &0.00 & 0.03 & 47 & 0.45 & 0.55 & 0.38 & 0.21 & -0.73 &-0.34 & -0.73& -0.3 \\
		-0.06 &0.05 & 0.05 & 62 & 0.59 & 0.59 & 0.63 & 0.63 & -0.78 &-1.33 & -1.33& -0.78 \\
		-0.08 &0.05 & 0.05 & 62 & 0.61 & 0.61 & 0.59 & 0.59 & -0.42 &-1.53& -1.53&-0.42 \\
		-0.10 &0.04 & 0.04 & 62 & 0.63 & 0.63 & 0.56 & 0.56 & -0.13 &-1.72& -1.72& -0.13\\
			\bottomrule
		\end{tabular}
	}
		\label{tab:hs_metrics}
		\end{table}

\section{Validation on Actual SPX Market}

This section examines the performance of Iron Condor portfolios in the actual SPX market to validate the prior findings. We replace \(\phi_{t,u}\) with (normalized) Profit \& Loss (P\&L) as the metric since only a single realization is available in real world.  Specifically, we analyze three cases, with time frames spanning 63, 63, and 49 trading days prior to the options' maturity dates of 2020-12-18, 2021-07-16, and 2022-10-21, respectively. The data used in this analysis is sourced from the OptionMetrics database.  

Figure \ref{Fig:real_bullish} presents the performance of portfolios with varying \(x\), \(\hat{x}\), and \(\bar{x}\) during a bullish market (Figure \ref{Fig:real_bullish}(a)), where \(S_T > 3700\).

Figure \ref{Fig:real_bullish}(b) shows the P\&L of portfolios with different \(x\) values. Only the portfolio with the maximum \(x\) (purple line) achieves its maximum profit, while 3 out of 5 portfolios result in negative returns, indicating a low success rate in this scenario.  

Figure \ref{Fig:real_bullish}(c) shows the influence of \(\hat{x}\). Unfortunately, none of these portfolios achieve positive return. 

Promising results emerge in Figure \ref{Fig:real_bullish}(d), where a mirror symmetry of left-biased portfolios and right-biased portfolios about the symmetric portfolio is observed, which is very interesting. All left-biased portfolios achieve their maximum profit, whereas symmetric and right-biased portfolios all get negative return. This align with our simulation results that the left-biased Iron Condor can effectively handle risks resulting from bullish markets. 

\begin{figure}[H]
	\centering
	\includegraphics[trim=0cm 0cm 0cm 0cm,clip=true,width=10cm]{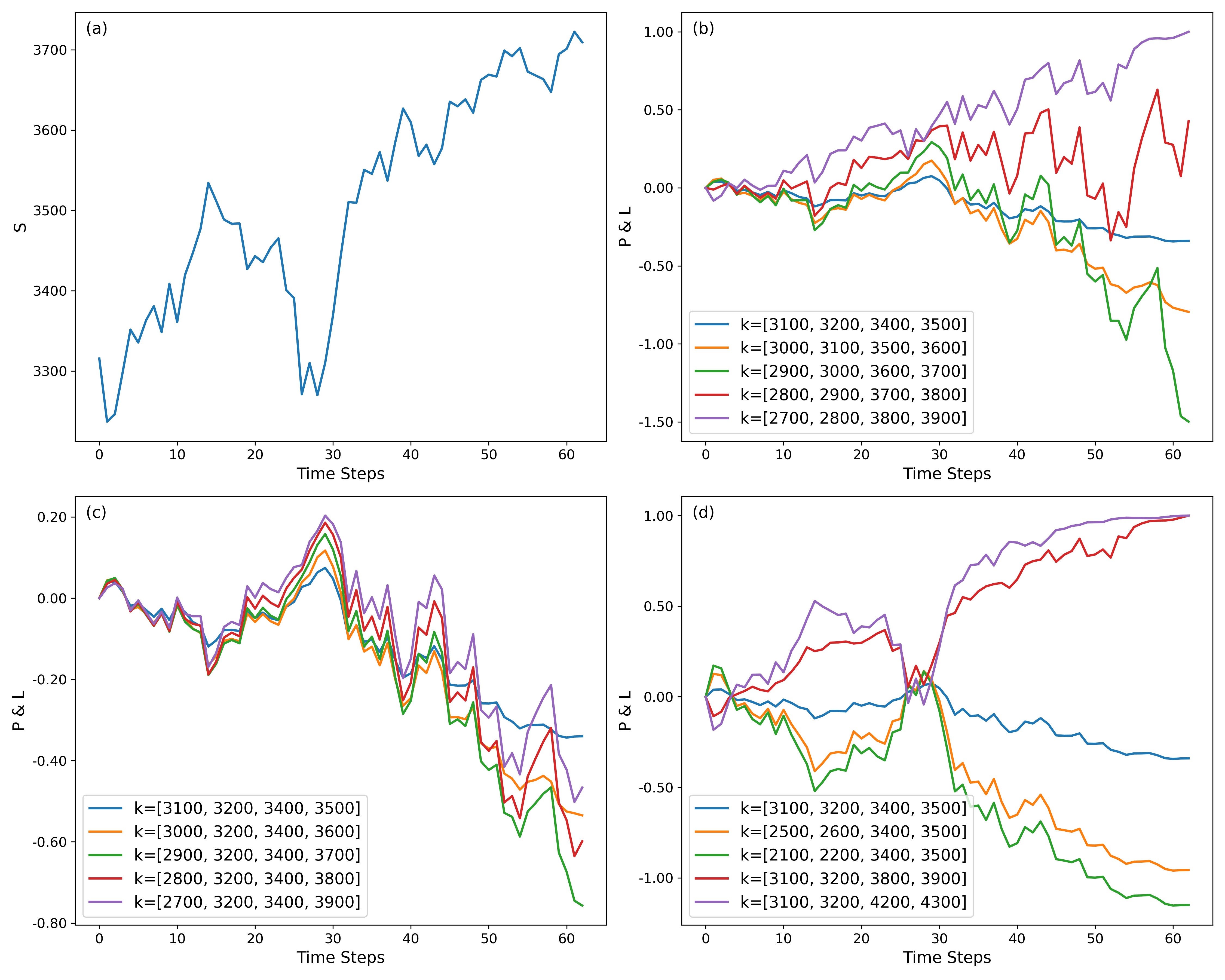}
	\vspace*{-3mm}
        \caption{ (a) Trajectory of underlying prices $S_t$; and portfolios P\&L influenced by (b) moneyness $x$; (c) strike span $\hat{x}$; and (d) asymmtry degree $\bar{x}$ values, during a bullish market observed over 63 trading days prior to the options' maturity date of 2020-12-18.}
	\label{Fig:real_bullish}
\end{figure}

Figure \ref{Fig:real_sideway} illustrates the P\&L performance of portfolios under a sideways market condition (Figure \ref{Fig:real_sideway}(a)).
In Figure \ref{Fig:real_sideway}(b), four out of five portfolios achieve their maximum profits, except the one with the smallest $x$. This result aligns with our simulation findings that increasing $x$ improves the success rate.
In Figure \ref{Fig:real_sideway}(c), for portfolios with a fixed $x$, varying $\hat{x}$ results in similar P\&L patterns. Notably, three out of five portfolios yield positive returns.
In Figure \ref{Fig:real_sideway}(d), all left-biased portfolios achieve their maximum profits, whereas the symmetric and right-biased portfolios exhibit poorer performance. This observation further supports the outcomes from our simulations.
Moreover, adopting an optimal stopping strategy within the time interval [40, 50] days can significantly reduce overall risks, even if the profits of some portfolios are slightly diminished.

\begin{figure}[H]
	\centering
	\includegraphics[trim=0cm 0cm 0cm 0cm,clip=true,width=10cm]{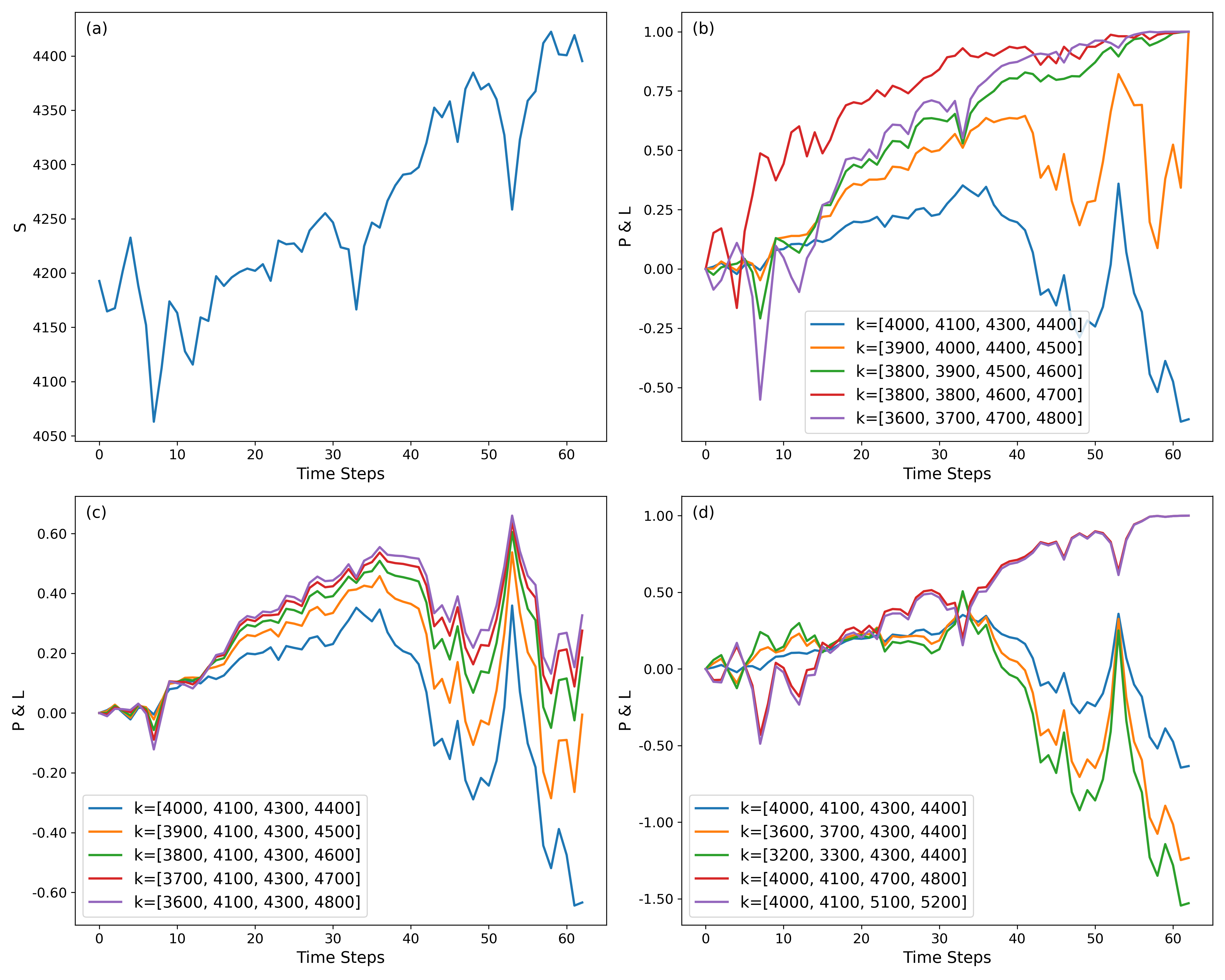}
	\vspace*{-3mm}
	\caption{(a) Trajectory of underlying prices $S_t$; (b) portfolios P\&L under varying $x$ values; (c) portfolios P\&L under varying $\hat{x}$ values; and (d) portfolios P\&L under varying $\bar{x}$ values, during a sideway market observed over 63 trading days prior to the options' maturity date of 2020-07-16}
\label{Fig:real_sideway}
\end{figure}

Finally, Figure \ref{Fig:real_bearish}(a) depicts an extreme bearish market. Under such conditions, changes in either $x$ or $\hat{x}$ result in losses, as shown in Figures \ref{Fig:real_bearish}(b) and \ref{Fig:real_bearish}(c). 
However, constructing an asymmetric right-biased Iron Condor portfolio leads to positive returns, aligning with our findings from simulations.

\begin{figure}[H]
	\centering
	\includegraphics[trim=0cm 0cm 0cm 0cm,clip=true,width=10cm]{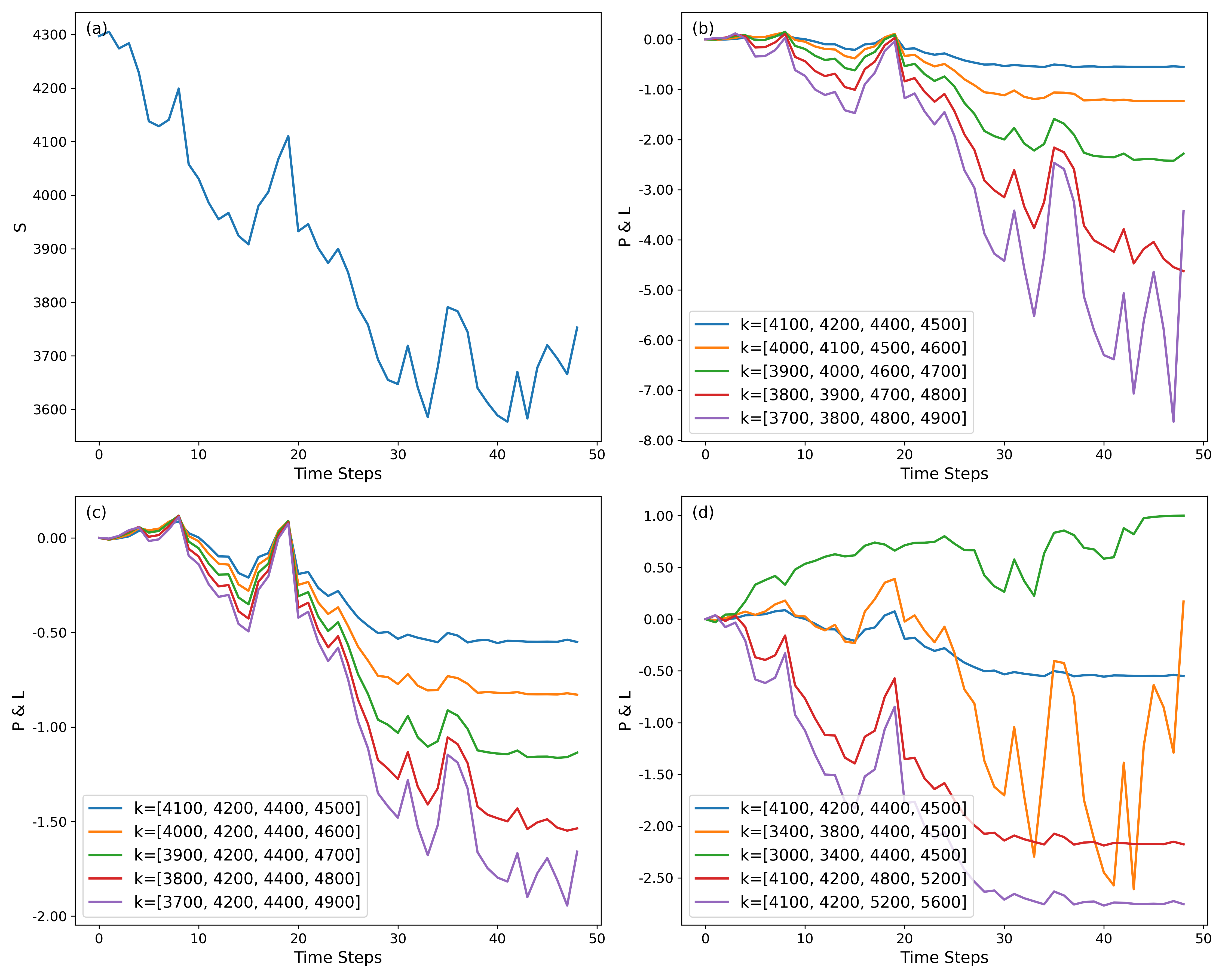}
	\vspace*{-3mm}
	\caption{(a) Trajectory of underlying prices $S_t$; (b) portfolios P\&L under varying $x$ values; (c) portfolios P\&L under varying $\hat{x}$ values; and (d) portfolios P\&L under varying $\bar{x}$ values, during a bearish market observed over 49 trading days prior to the options' maturity date of 2022-10-21}
	\label{Fig:real_bearish}
\end{figure}

\section{Conclusion}

This paper provides an in-depth analysis of the influence of the control process \( u(k_i,\tau) \) (\( u(x,\hat{x},\bar{x},\tau) \)) on Iron Condor portfolios.

Initially, we assume the underlying price process \( S_t \) is a bounded martingale within \([K_2, K_3]\), and then prove a theorem that the optimal stopping time is \( \tau = T \) if the Iron Condor strike structure satisfies \( k_1 < k_2 = K_2 < S_t < k_3 = K_3 < k_4 \) for all \( t \in [0, T] \).

We extend our study to more general scenarios by employing a simulation method based on the data generator derived from the Rough Heston model. We parameterize  \( u \) using four variables: moneyness (\(x\)), strike span (\(\hat{x}\)), asymmetry degree (\(\bar{x}\)) and optimal stopping time \(\tau\). The simulation results reveal the following key findings:
(1) Asymmetric, left-biased Iron Condor portfolios with \( \tau = T \) tend to be optimal in SPX markets, balancing profitability and risk management effectively; 
(2) Deep OTM Iron Condor portfolios improve profitability and success rates, but they also introduce the risk of extreme losses. Adopting an optimal stopping strategy can mitigate these losses, although it slightly reduces potential profits;
(3) \(\tau\) generally falls between 50\% and 75\% of the total duration, except for left-biased portfolios.

Finally, we validate our findings on the actual SPX market through three case studies covering bullish, sideways, and bearish market conditions, and the results support the simulation findings. 

There are two limitations to the current research: 
\begin{enumerate}
    \item The optimal stopping strategies derived from simulation methods need to be quantified and further analyzed.
    \item The approach requires a market trend forecasting method for effective strategy design.
\end{enumerate}

The first limitation may be addressed using entropy-based methods, while the second could be further investigated through machine learning techniques in future research.

\section*{Declarations}

\begin{itemize}
	\item Funding:\\
	      This work is supported by National Natural Fundation of China (No.42402239) and Jiangsu University of Science and Technology Scientific Research Start-up Fund (No.1132932306).
	\item Competing Interest declaration:\\
	      The author declare that they have no known competing financial interests or personal relationships that could have appeared to influence the work reported in this paper.
	\item Data Availability Statement:\\
	      The author do not have permission to share data.
	\item Author Contributions:\\
	      Qiguo Sun: Conceptualization, Investigation, Methodology, Analysis, Writing - Review\& Editing. \\
            Hanyue Huang: Conceptualization, Investigation, Methodology, Analysis, Writing - Review\& Editing. \\
           Xibei Yang:           Methodology
          
	\item Ethical statement:\\
	      All data handling procedures adhered to ethical guidelines for research.

\end{itemize}

\bibliographystyle{plainnat}
\bibliography{latex_files/mybib}

\end{document}